\documentclass{article}

\usepackage{fullpage}

\usepackage{amsmath,amssymb,xcolor}
\usepackage{enumitem}

\usepackage[hyperfootnotes=false,
   colorlinks,
   linkcolor={blue},
   citecolor={blue},
   urlcolor={blue}
   ]{hyperref}

\usepackage[backend=biber,style=alphabetic,maxnames=10]{biblatex}

\usepackage{xurl}

\renewbibmacro*{url+urldate}{%
  \setunit{\newline}%
  \printfield{url}%
  \setunit*{\addspace}%
  \printurldate}

\usepackage{amsthm,thmtools}
\declaretheoremstyle[qed=$\lrcorner$,bodyfont=\it]{it}
\declaretheoremstyle[qed=$\lrcorner$,bodyfont=\rm]{rm}

\declaretheorem[style=it,numberwithin=subsection]{lemma}
\def\thelemma{\thesubsection.\Alph{lemma}}

\declaretheorem[style=it,numberlike=lemma]{corollary}
\declaretheorem[style=it,numberlike=lemma]{theorem}
\declaretheorem[style=it,numberlike=lemma]{proposition}

\declaretheorem[style=it]{remark}
\declaretheorem[style=it,numbered=no]{definition}
\declaretheorem[style=rm,numberlike=lemma]{question}

\declaretheorem[style=it,name=Theorem]{theoremA}

\declaretheorem[style=it,name=Example,numberlike=theoremA]{exampleA}
\declaretheorem[style=it,name=Conjecture,numberlike=theoremA]{conjecture}
\declaretheorem[style=it,name=Open~Problem,numberlike=theoremA]{problem}

\makeatletter

\newcommand{\XYAB}{XY\mkern -4.5mu AB}
\newcommand{\unskipp}[1][-6ex]{\mbox{}\vspace{#1}\par}

\let\tmp=\phi \let\phi=\varphi \let\varphi=\tmp
\let\tmp=\epsilon \let\epsilon=\varepsilon \let\varepsilon=\tmp

\usepackage{bm}

\def\mkmathletter#1#2{%
    \expandafter\gdef\csname#1#2\endcsname%
           {\ensuremath{\csname math#2\endcsname{#1}}}%
}
\edef\letters{a,b,c,d,e,f,g,h,i,j,k,l,m,n,o,p,q,r,s,t,u,v,w,x,y,z}
\edef\Letters{A,B,C,D,E,F,G,H,I,J,K,L,M,N,O,P,Q,R,S,T,U,V,W,X,Y,Z}
\edef\Lletters{\letters,\Letters}
\def\mkmathletters#1#2{%
        \@for\ltr:={#2}\do{\expandafter\mkmathletter\ltr{#1}}%
}
\mkmathletters{bb}{\Letters}    
\mkmathletters{bf}{\Lletters}   
\mkmathletters{rm}{\Lletters}   
\mkmathletters{sf}{\Lletters}   
\mkmathletters{bar}{\Lletters}  
\def\st{\; {\bm :}\; }
\def\set#1{\left\{#1\right\}}
\def\vect#1{\left(\!\!\!\begin{array}{c}#1\end{array}\!\!\!\right)}
\let\dmo=\DeclareMathOperator
\dmo\ing{\mathsf{ing}}
\let\L=\underfined
\dmo\L{\mathrm{L}}
\dmo\gr{\mathrm{Gr}}
\dmo\supp{\mathrm{supp}}
\makeatother
\def\exclam#1{\ifmmode\relax\else\marginpar{\textcolor{#1}{\fbox{!!!}}}\fi}

\begin{document}
\title{Spectral Conditions for the Ingleton Inequality}
\author{Rostislav Matveev and Andrei Romashchenko}

\maketitle
\begin{abstract}
  The Ingleton inequality is a classical linear information inequality
  that holds for rank functions of representable matroids but fails
  for general entropic vectors.
  Understanding the extent of its possible violations
  has been a longstanding problem in information theory.
  In this article, we show that for a class of jointly distributed
  random variables $(X,Y,A,B)$ the Ingleton inequality holds up to an
  additive error of order $\log H(\XYAB)$,
  even though the mutual information between $X$ and $Y$ is strongly
  non-extractable.
  Contrary to common intuition, strongly non-extractable mutual
  information does not lead to large violations of the Ingleton
  inequality in this setting.
  More precisely, we consider pairs $(X,Y)$ that are uniformly
  distributed on their joint support and whose associated biregular
  bipartite graph is an expander.
  For all auxiliary random variables $A$ and $B$ jointly distributed
  with $(X,Y)$, we establish a lower bound on the Ingleton quantity
  \[
    I(X; Y \mid A) + I(X; Y \mid B) + I(A; B) - I(X; Y)
  \]
  in terms of the spectral parameters of the underlying graph.
  Our proof combines the expander mixing lemma with a partitioning
  technique for finite sets.
\end{abstract}


\section{Introduction}

\subsection{Ingleton inequality}
The Ingleton inequality was originally introduced as an inequality
satisfied by the rank function of any representable matroid.  It was
used by A.~W.~Ingleton as a necessary condition for the
representability of a matroid over a field~\cite{ingleton}.  Since the
1990s, its entropic analog has been an active topic of research in
information theory.  Using standard information-theoretic notation,
this inequality can be written as
\begin{equation}
  \label{eq:ingleton}
  I(X; Y|A) + I(X; Y|B) + I(A; B) - I(X; Y) \ge 0.
\end{equation}
It is known that the entropic Ingleton inequality is not universally
valid: there exist joint distributions $(X,Y,A,B)$ that violate
\eqref{eq:ingleton}; 
see~\cite{matuvs1995conditional,csirmaz1996dealer,zhang1998characterization,
  matu1999conditional,hammer2000inequalities}.
The problem of quantifying the maximal possible violation
has been studied extensively; 
see, e.g.,
\cite{dougherty2011non,matuvs2016entropy,boston2020violations}.
A complete answer would amount to characterizing
achievable entropy profiles for $4$-tuples of random variables.

The entropy version of the Ingleton inequality has been used, among
other contexts, to derive bounds for linear network
coding~\cite{dougherty2005insufficiency,dougherty2007networks,li2003linear}
and bounds on the information ratio of linear secret sharing
schemes~\cite{padro2013finding,beimel2008matroids,csirmaz2009impossibility}.
A natural question is to characterize the class of
distributions for which the Ingleton inequality holds, either exactly
or up to an additive error.

\subsection{Perturbed Ingleton inequalities}
It is known that \eqref{eq:ingleton} holds for all distributions
for which the mutual information between $X$ and $Y$ is
\emph{extractable}, i.e., the mutual information of the pair $(X,Y)$
coincides with its common information in the sense of G\'acs and
K\"orner~\cite{gacs1973common}. More explicitly, we say that the
mutual information between $X$ and $Y$ is (totally) extractable if
there exists a random variable $W$ such that
\begin{equation}
  \label{eq:common-information}
  I(X; Y|W)=0, \qquad H(W|X) = 0, \qquad H(W|Y) = 0 .
\end{equation}
Under these assumptions, the proof of~\eqref{eq:ingleton} closely
parallels the original proof of the Ingleton inequality for rank
functions of representable matroids.%
\footnote{Recently, B\'erczi et al. gave a different proof of the
  Ingleton inequality for representable matroids. Their method links
  the inequality to the existence of universal tensor products of
  matroids. This approach extends to a larger class of matroids,
  including skew-representable ones~\cite{berczi2026interaction}.%
} %
In the information--theoretic setting, the G\'acs--K\"orner common
information (the random variable $W$ in \eqref{eq:common-information})
plays the role of intersection of two linear subspaces,
see~\cite{hammer2000inequalities,dougherty2009linear} for a more
detailed discussion.

The standard proof of the implication
\( \eqref{eq:common-information} \Longrightarrow \eqref{eq:ingleton}
\) gives a robust approximate version of the Ingleton inequality, in
which the zero on the right-hand side of~\eqref{eq:ingleton} is
replaced by a combination of the entropy quantities appearing
in~\eqref{eq:common-information};  see
Lemma~\ref{p:shannon-ineq}\ref{i:ing-soft}.  Tighter robust versions
can be obtained using non-Shannon-type inequalities.  In particular,
we can apply an inequality proved in~\cite{makarychev2002new}.
Namely, for every quintuple of jointly distributed random variables
$(X,Y,A,B,W)$, the following inequality holds:
\begin{equation}
  \label{eq:mmrv}
    I(X; Y | A)
    + I(X; Y| B) + I(A; B) - I(X; Y) 
    \ge - \bigl( I(X; Y | W) + I(Y; W | X) + I(X; W|Y) \bigr).
\end{equation}
This inequality generalizes the first non-Shannon-type information
inequality of Zhang and Yeung~\cite{zhang1998characterization}.
We refer to it as the MMRV inequality.

The MMRV inequality immediately implies 
\begin{equation*}
  \label{eq:mmrv-weak}
    I(X; Y | A)
    + I(X; Y| B) + I(A; B) - I(X; Y)
   \ge - \bigl( I(X; Y | W) + H(W | X) + H(W|Y) \bigr).
\end{equation*}
Thus, a small violations of the
equalities in~\eqref{eq:common-information} lead only to a small
violation of the Ingleton inequality~\eqref{eq:ingleton}. For this
reason, it is natural to expect that strong violations of the Ingleton
inequality require distributions for which the mutual information
between $X$ and $Y$ is far from extractable: namely, distributions for
which, for every auxiliary random variable $W$, the equalities
in~\eqref{eq:common-information} are violated as strongly as possible.

In this article, we show that this intuition seems to be
misleading.  Specifically, we prove that for a natural class of pairs
$(X,Y)$ whose mutual information appears to be%
\footnote{A precise formulation is given below as
Conjecture~\ref{co:nonextractable}. A proof of this conjecture is
the subject of our on-going research and is currently being prepared for
publication.} %
strongly non-extractable (and for arbitrary extending random variables
$A$ and $B$) the Ingleton inequality holds up to additive error of
order $\log H(\XYAB)$.

\subsection{Our setup and the main result}

We focus on pairs of random variables $(X,Y)$ that are regular in the
following sense: $(X,Y)$ is uniformly distributed on its joint
support, and the marginals $X$ and $Y$ are uniformly distributed on
their respective supports. The properties of such a distribution are
therefore fully determined by the combinatorial structure of the
support. It is convenient to represent this support as a biregular
bipartite graph $\Gsf$, which we call the \emph{support} of the pair
$(X,Y)$.  Specifically, let $\Xsf$ and $\Ysf$ denote the supports of
$X$ and $Y$, respectively, and let $\Esf \subseteq \Xsf \times \Ysf$
be the support of the joint distribution. Sampling $(X,Y)$ then
amounts to choosing an edge uniformly at random from the biregular
bipartite graph $\Gsf=(\Xsf\sqcup\Ysf,\Esf)$, with $X$ and
$Y$ being the left and right endpoints of this edge. For such
distributions, entropies are given by the support sizes:
\[
  H(X) = \log |\Xsf|,
  \qquad
  H(Y) = \log |\Ysf|,
  \qquad
  H(X,Y) = \log |\Esf|.
\]
Consequently,
\[
  I(X; Y) = \log \frac{|\Xsf| \cdot |\Ysf|}{|\Esf|}.
\]

In this article we prove the following theorem.

\begin{theoremA}[Main result, see Theorem~\ref{p:main} in Section~\ref{s:main}]
  \label{th:main}\ \\
  Let $\epsilon_{0}>0$ be a positive constant. 
  Suppose $(X,Y)$ is ob\-tain\-ed by choosing a uniformly random edge in a
  biregular bipartite graph $G=(\Xsf\sqcup\Ysf,\Esf)$ whose largest
  and second-largest eigenvalues are $\lambda_1$ and $\lambda_2$,
  respectively, and such that $I(X; Y)>\epsilon_0$. 
  Then for all random variables $A,B$ jointly
  distributed with $(X,Y)$,
  \begin{equation*}
    I(X ;  Y | A) \!+\! I(X ;   Y | B)\! +\! I(A; B)\! -\! I(X ;  Y)
    \ge \log\!\left(\!\frac{\lambda_1}{\lambda_2^2}\!\right)
    \!-\! O\bigl(\log H(X,Y,A,B)\bigr)
  \end{equation*}
  where the $O(\cdot)$-summand implicitly  depends only on the chosen
  threshold value $\epsilon_{0}$.
\end{theoremA}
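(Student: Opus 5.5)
We plan to reduce the general statement, at a cost of $O(\log H(X,Y,A,B))$, to the case where $A$ and $B$ are "combinatorial," and then to exploit expansion through the expander mixing lemma.

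\textbf{Step 1: uniformization.} First we apply a standard partitioning (typization) lemma for finite sets. After conditioning on an auxiliary variable of entropy $O(\log H(\XYAB))$, we may assume the following. $A$ takes values $a$ uniformly, each fibre $\{A=a\}$ projects onto sets $\Xsf_a\subseteq\Xsf$ and $\Ysf_a\subseteq\Ysf$, and the edges $\Esf_a\subseteq \Esf\cap(\Xsf_a\times\Ysf_a)$ carry a near-uniform conditional distribution of $(X,Y)$. The same holds for $B$, and also jointly for $(A,B)$. Conditioning changes every term of the Ingleton expression by at most the entropy of the conditioning variable. We still have to check that conditioning does not destroy the spectral structure. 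We do not condition $(X,Y)$ itself; we only restrict to a sub-collection of fibres, so the ambient graph $G$ is unchanged. We only need $|\Esf_a|\ge |\Esf|/\mathrm{poly}(H)$ in aggregate, which costs $O(\log H)$ in $I(X;Y)$.

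\textbf{Step 2: spectral bound on conditional information.} The expander mixing lemma, in its biregular form, gives
\[
 e(S,T)\le \frac{\lambda_1}{\sqrt{|\Xsf||\Ysf|}}\,|S||T|+\lambda_2\sqrt{|S||T|}.
\]
With $S=\Xsf_a$ and $T=\Ysf_a$, and using $H(X,Y|A=a)\lesssim\log|\Esf_a|\le \log e(\Xsf_a,\Ysf_a)$, this yields
\[
I(X;Y|A)\ \gtrsim\ \log\frac{|\Xsf_a||\Ysf_a|}{|\Esf_a|}\ \ge\ \min\Bigl(\log\frac{\sqrt{|\Xsf||\Ysf|}}{2\lambda_1},\ \tfrac12\log\frac{|\Xsf_a||\Ysf_a|}{4\lambda_2^2}\Bigr).
\]
This is the first of two regimes. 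Either the fibre is "large," so that it behaves like a random subgraph and $I(X;Y|A)\approx I(X;Y)$ (note $\log(\sqrt{|\Xsf||\Ysf|}/\lambda_1)=I(X;Y)$ for biregular graphs, since $\lambda_1=|\Esf|/\sqrt{|\Xsf||\Ysf|}$). Or it is "small," and then $I(X;Y|A)\ge \tfrac12\log(|\Xsf_a||\Ysf_a|)-\log\lambda_2-O(1)$. Recall also that $I(X;Y)>\epsilon_0$ makes $\lambda_2$ nontrivially smaller than $\lambda_1$ in the relevant regime. Otherwise $\log(\lambda_1/\lambda_2^2)$ is negative and the bound is handled by Shannon inequalities (Lemma~\ref{p:shannon-ineq}) together with constants depending on $\epsilon_0$.

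\textbf{Step 3: combining $A$ and $B$.} If either $A$ or $B$ is in the large regime, then its conditional term alone nearly cancels $-I(X;Y)$, and the remaining terms are nonnegative. If both are small, we write
\[
H(X|A)+H(Y|A)\approx \log|\Xsf_a|+\log|\Ysf_a|,
\]
and similarly for $B$. We then bound $I(A;B)$ from below using the fact that the joint fibre $\Esf_{a}\cap\Esf_b$ lies inside $(\Xsf_a\cap\Xsf_b)\times(\Ysf_a\cap\Ysf_b)$. Counting $H(X,Y)\le H(A,B)+\log e(\Xsf_a\cap\Xsf_b,\Ysf_a\cap\Ysf_b)$ and applying the mixing lemma once more to the intersection gives $H(X,Y)\le H(A)+H(B)-I(A;B)+\log(\lambda_2\cdot\text{stuff})$ in the small regime. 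Adding the two half-bounds from Step 2 and this estimate, the sizes $|\Xsf_a|,|\Ysf_a|,|\Xsf_b|,|\Ysf_b|$ should telescope against $H(X)+H(Y)-H(X,Y)$. What remains is $\log\lambda_1-2\log\lambda_2$.

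\textbf{Main obstacle.} The hardest part is Step 3, where the fibre sizes must be matched exactly so that they cancel and leave $\log(\lambda_1/\lambda_2^2)$. Mixed cases, where fibres are large on one side and small on the other, or where intersections are large while fibres are small, need a case split at threshold $|S||T|\approx\lambda_2^2|\Xsf||\Ysf|/\lambda_1^2$. The approach I would try first is to write every quantity as a function of the four log-sizes, show that the target inequality is linear in them within each regime, and check the vertices of the resulting polytope.
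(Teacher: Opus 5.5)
Your Steps~1--2 follow the paper's route (the decomposition of Theorem~\ref{p:split4} and the dichotomy of Lemma~\ref{p:q-unif}). Step~3, which you flag as the obstacle, fails as sketched. From $H(XY)\le H(AB)+\log e(\Xsf_a\cap\Xsf_b,\Ysf_a\cap\Ysf_b)$ you obtain $I(A;B)\le H(A)+H(B)-H(XY)+\log e(\cdots)$. That is an \emph{upper} bound on $I(A;B)$, but $I(A;B)$ enters $\ing$ with a plus sign, so it cannot help. No second application of the mixing lemma to intersections is needed; the missing idea is a Shannon-type inequality. Applying $I(A;B)\ge H(W)-H(W|A)-H(W|B)$ with $W=X$ and with $W=Y$, and averaging, gives
\[
  \ing(X,Y,A,B)\ \ge\ \L(X,Y)-\L(X,Y|A)-\L(X,Y|B)
\]
(Lemma~\ref{p:shannon-ineq}\ref{i:ing-lll}). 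Your ``small'' regime is exactly $\L(X,Y|A)\le\log\lambda_2+O(1)$. So if both $A$ and $B$ are small, you get $\log\lambda_1-2\log\lambda_2-O(1)$ at once: the fibre sizes cancel identically and no polytope check is required. Mixed cases are trivial, since one ``large'' variable suffices and the remaining terms are dropped.

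The ``large'' case also has gaps.
\begin{itemize}
\item It yields only $\ing\ge-O(1)$, so you additionally need $\log\lambda_1-2\log\lambda_2\le C(\epsilon_0)$. This is the Alon--Boppana-type \emph{lower} bound on $\lambda_2$ (Corollary~\ref{p:alonlike-rv}), and it is the only place where $I(X;Y)\ge\epsilon_0$ is used. You describe the role of $\epsilon_0$ the other way round. The hypothesis is indispensable: for $K_{n,n}$ one has $\lambda_2=0$, yet $\ing=0$ at $A=X$, $B=Y$.
\item After conditioning on an atom $v$ of the decomposition variable, $(X,Y|v)$ is only subsupported on $\Gsf$. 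Neither the fibres $\Esf_a$ nor the components need be large (take $A=XY$), so your requirement $|\Esf_a|\ge|\Esf|/\mathrm{poly}(H)$ is neither true nor needed.
\item $I(X;Y|v)$ can exceed $I(X;Y)$, and different atoms can fall into different regimes. So ``nearly cancels $-I(X;Y)$'' is not the right comparison.
\end{itemize}
The paper resolves this by proving one bound valid in both regimes, $\ing(X,Y,A,B|v)\ge-2\log\lambda_2+\L(X,Y|v)-O(1)$. In the large case it uses $\log\frac{|\Xsf||\Ysf|}{|\Esf|}-I(X;Y|v)\ge\L(X,Y|v)-\log\lambda_1$, which follows from $H(X|v)\le\log|\Xsf|$ and $H(Y|v)\le\log|\Ysf|$, and then $\log\lambda_1\le 2\log\lambda_2+C_0$. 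Averaging over $v$ then gives $\L(X,Y|V)\ge\L(X,Y)-H(V)$. The tail atom $v_\infty$ is absorbed by its small probability.
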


The conclusion of the theorem is especially strong when $\lambda_2$
is small. If the mutual information $I(X; Y)$ is bounded by a
positive constant $\epsilon_{0}$ from below, or equivalently, if the
density of the support graph $\Gsf$ is separated from 1, then we
have the following bounds (Proposition~\ref{p:alonlike})
\begin{equation*}
  2\log\lambda_{1}\geq 2\log\lambda_{2}\geq\log\lambda_{1}-C(\epsilon_{0})
\end{equation*}
Thus, for balanced expander families, that is, growing families of
biregular bipartite graphs, such that the sizes of the left- and right
parts are the same and such that the lower bound on the
$\log\lambda_{2}$ is attained, the conclusion of the theorem reads
\begin{equation*}
  I(X; Y | A) \!+\! I(X; Y | B)\! +\! I(A; B)\! -\!
  I(X; Y)
  \geq
  - O\bigl(\log H(X,Y,A,B)\bigr)
\end{equation*}

\begin{remark}
  The conclusion of Theorem~\ref{th:main} would be false without the
  residue term $O\bigl(\log H(X,Y,A,B)\bigr)$ on the right-hand side.
  Indeed, taking $A=X$ and $B=Y$ makes the left-hand side equal to
  $0$. At the same time, there exist graphs $G$ with
  $\log\!\left(\frac{\lambda_1}{\lambda_2^2}\right) > 0$;  see, for
  example, the family of graphs in Example~\ref{ex:projective-flags}
  below.  Thus, the term
  $ \log\!\left(\!\frac{\lambda_1}{\lambda_2^2}\!\right)$ in the
  right-hand side of the inequality in Theorem~\ref{th:main} can be
  strictly positive.
  
  We do not know whether $O\bigl(\log H(X,Y,A,B)\bigr)$ in this
    inequality can be improved to $O\bigl(\log H(X,Y)\bigr)$ or
  $O(1)$. See Sections~\ref{s:examples} and~\ref{s:discussion} for the
  detailed discussion and examples.
\end{remark}

We emphasize that in Theorem~\ref{th:main} the regularity assumption
(uniformity on the support) is imposed only on the pair $(X,Y)$;  no
regularity assumptions are required on the auxiliary random variables
$A$ and $B$.

\subsection{Relation to the extractable information}
It is expected that the mutual information is not extractable for
pairs $(X,Y)$ supported on expanders.
Technically, this means that at least one of the equalities in
\eqref{eq:common-information} must be strongly violated. In framework
of Kolmogorov complexity this phenomenon was observed in
\cite[theorem~2 and theorem~3~(ii)]{caillat2024common-arxiv}.%
\footnote{%
  For $(X,Y)$ from Example~\ref{ex:projective-flags}, discussed below,
  the non-extractability of mutual information in terms of Kolmogorov
  complexity was first established by different methods in
  \cite[Theorem 3]{muchnik1998common} and~\cite[Theorems 4 and
  8]{chernov2002upper}.} %
The following conjecture, which formalizes this claim in the context
of Shannon's information theory, asserts that the quantities
in~\eqref{eq:common-information} cannot all be simultaneously small,
but instead satisfy a specific nontrivial trade-off.  Establishing
this conjecture is the subject of our ongoing research.
\begin{conjecture}
  \label{co:nonextractable}
  There exists a universal constant $C>0$, such that for every
  uniform on the support pair $(X,Y)$ of random variables and any
  jointly distributed $W$ the
  following inequality holds
  \[
    I(X; Y|W) + I(W; X|Y) + I(W; Y|X)
    \geq
    \min\set{I(X\!; \!Y),2\log\frac{\lambda_{1}}{\lambda_{2}}}-C\cdot\log H(XY)
  \]
  where $\lambda_{1}$ and $\lambda_{2}$ are the largest- and,
  respectively, the second largest eigenvalues of the graph supporting
  $(X,Y)$.
\end{conjecture}

Assuming the conjecture holds true, the MMRV inequality applied to a
pair supported on a balanced expander becomes a trivial (Shannon-type)
inequality and cannot imply even an approximate form of the Ingleton
inequality, whereas Theorem~\ref{th:main} provides a strong lower
bound on the Ingleton expression.

On the other hand, if mutual information is extractable, i.e., if the
equalities~\eqref{eq:common-information} hold, then the MMRV
inequality~\eqref{eq:mmrv} is strongest possible (the three
  terms in the right-hand side of \eqref{eq:mmrv} vanish) and implies
the Ingleton inequality~\eqref{eq:ingleton}.  In this case,
$\lambda_2$ reaches its upper bound $\lambda_1$, and the conclusion of
Theorem~\ref{th:main} reduces to a Shannon-type inequality, see
Section~\ref{s:spectral-mmrv} for details.

Thus, the classical approach (and its refinement via the MMRV
inequality) and Theorem~\ref{th:main} provide complementary techniques
for proving approximate forms of the Ingleton inequality, as they are
effective in different regimes.

\subsection{Examples}
\begin{exampleA}[See Section~\ref{s:linear}]
  \label{ex:projective-flags}
  Fix a finite field $\mathbb{F}$ and consider the projective plane
  over $\mathbb{F}$.  Let $(X,Y)$ be a random incidence in this plane,
  where $X$ is a uniformly random point and $Y$ is a uniformly random
  line incident to $X$.  A direct calculation shows that
  \[
    I(X; Y) = \log |\mathbb{F}| + O(1)
  \]
  It is known (see Section~\ref{s:linear}) that
  the incidence graph of a finite projective plane satisfies
  $\lambda_2 \le \sqrt{\lambda_1}$.  Therefore, the inequality in
  Theorem~\ref{th:main} rewrites to
  \[
    I(X; Y | A) + I(X; Y | B) + I(A; B) - I(X; Y)
    \ge - O\bigl(\log H(X,Y,A,B)\bigr).
  \]
\end{exampleA}

Previous example is symmetric under exchanging $X$ and $Y$,
due to projective duality.  Theorem~\ref{th:main} can be also applied
to asymmetric pairs, as in the next example.
\begin{exampleA}[See Section~\ref{s:algebraic}]
  Fix a finite field $\mathbb{F}$. Let $\Fbb^{(2)}[t]$ be the space of
  polynomials of degree at most $2$ over this field. We define the
  left and the right vertex sets of the biregular bipartite graph
  $\Fsf=(\Xsf\sqcup\Ysf,\Esf)$ by
  \[
    \Xsf:=\Fbb^{2},\quad
    \Ysf:=\Fbb^{(2)}[t],\quad
    \Esf
    =
    \set{\bm{(}\vect{x\\y},p\bm{)}\in\Fbb^{2}\times\Fbb^{(2)}[t]
      \st y=p(x)}
  \]
  Let $\Fbf=(X,Y)$ be the pair supported on $\Fsf$, i.e., the
  supports of $X$, $Y$ and the joint $XY$ are $\Xsf$, $\Ysf$ and
  $\Esf$, respectively.  For this distribution we have
  $I(X; Y)=\log |\mathbb{F}|$.  Computing the spectrum of $\Fsf$ and
  applying Theorem~\ref{th:main} gives
  \[
    I(X; Y | A) + I(X; Y | B) + I(A; B) - I(X; Y) 
     \geq -\frac{\log |\mathbb{F}|}{2} -O(\log H(\XYAB))
  \]  
\end{exampleA}
  
\subsection{Proof technique}
The proofs in this article rely on two main ingredients: the expander
mixing lemma (in the form adapted for bipartite graphs presented
in~\cite{evra2015mixing}), and combinatorial partitioning results for
multidimensional finite sets from~\cite{alon2007partitioning}.

We conclude the introduction with a brief outline of the proof of our
main result (Theorem~\ref{th:main}).  After introducing some
preparatory material in Section~\ref{s:prelim}, we first establish our
result in a restrictive setting in which not only the pair $(X,Y)$ but
the entire quadruple $(X,Y,A,B)$ is uniformly distributed on the
support (or, more generally, \emph{approximately uniform} under a
suitable notion of approximate uniformity). This part of the proof
relies on the expander mixing lemma combined with conventional
information-theoretic inequalities, and is presented in
Section~\ref{s:ing-q-unif}. In Section~\ref{s:split}, we apply a
partitioning result of Alon et al.~\cite{alon2007partitioning} to
decompose the distribution of $(X,Y,A,B)$ into a small number of
approximately uniform components, thereby preparing for the second
step, in which we treat the general case where the joint distribution
of $(X,Y,A,B)$ need not be uniform;  this is done in
Section~\ref{s:main}. In Section~\ref{s:examples}, we provide a series
of examples where the main theorem is applicable and yields strong
conclusions. Finally, in Section~\ref{s:discussion} we comment on
possible generalizations and open questions.  For the reader's
convenience, Section~\ref{s:shannon} contains the list of
Shannon-type inequalities used in the article, together with their
justifications.

\section{Preliminaries}\label{s:prelim}
\subsection{Notations}\label{s:notations}
We use the notation $[n]:=\set{0,1,\dots,n-1}$ and $2^{S}$ for
the power set of a set $S$.
We denote by $\exp(x)$ the exponential function $e^x$.
We use $O(\cdot)$-notation in the
asymptotic regime, where the argument of $O(\cdot)$ is large; 
we assume $0\leq O(x)\leq C_{1}\cdot x + C_{2}$, where $C_{i}$ are some
non-negative constants depending on the context.

All random variables in this article take values in finite
alphabets. We will use the following syntactic-semantic rule: for
random variables $X,A,U,\dots$ we denote their alphabets by the same
letter in san-serif, that is $\Xsf,\Asf,\Usf,\dots$, respectively.  We
say that $x\in\Xsf$ is an \emph{atom} of $X$, if $\Pbb[X=x]>0$.  We
denote by $\sharp(X)$ the number of atoms, that is the cardinality of
the support of the distribution of $X$.

For a tuple of jointly distributed random variables
$\Xbf=(X_{i}\st i=0,\dots,n-1)$ and a set of indices $I\in 2^{[n]}$ we
denote by $X_{I}$ a joint random variable of the subcollection
$(X_{i}\st i\in I)$. We stress that $X_{I}$ is a single random
variable, where the marginalization structures of the
collection are forgotten. For a pair $(X,Y)$ we denote
their joint (forgetting the margins) by concatenating the letters,
$XY$. Similar notation is used for longer tuples.

We stress here the point, that the tuple of random variables
$\Xbf=(X_{i}\st i\in[n])$ and the joint random variable $X_{[n]}$ are
distinct objects in our setup. We denote the tuple as a comma
separated list, while the joint is denoted either by concatenating the
corresponding letters or by using subset of indices, as
in $X_{I}$.

We say that a tuple $(X_{i},Y_{j}\st i=0,\dots,n; \,j=0,\dots,k)$ of
jointly distributed random variables is an \emph{extension} of
$(X_{i}\st i=0,\dots,n)$ or, equivalently, $(Y_{j}\st j=0,\dots,k)$
\emph{extends} $(X_{i}\st i=0,\dots,n)$.

Besides the standard notations $H(X)$, $H(X|Y)$, $I(X; Y)$, $I(X; Y|Z)$ for
(conditional) entropy and mutual information, we will use the
following notations
\begin{align*}
  &\ing(X,Y,A,B):=I(X; Y|A)+I(X; Y|B)+I(A; B)-I(X; Y)\\
  &\ing(X,Y,A,B|U):= I(X; Y|AU)+I(X; Y|BU)+I(A; B|U)-I(X; Y|U)\\
  &\phantom{\ing(X,Y,A,B|U):}=\ing(XU,YU,AU,BU) \\
  &\L(X,Y):=\frac12H(X) + \frac12H(Y)-I(X; Y)
    =\frac12\big( H(X|Y) + H(Y|X)\big)\\
  &\L(X,Y|Z)\!:=\!\frac12H(X|Z)\! +\! \frac12H(Y|Z)\!-\!I(X; Y|Z)
    \!=\!\frac12\big( H(X|YZ)\! +\! H(Y|XZ)\big)
\end{align*}
Observe that $\L(X,Y)$ satisfies the usual properties of a metric
(positivity, symmetry, the triangle inequality), see e.g.,
\cite[exercise~2.9]{cover1999elements}.
We use natural base for logarithm through the article.

For an atom $y\in\Ysf$ in the alphabet of $Y$ we denote by $X|y$ the
conditioned random variable, in lieu of more conventional
$X|Y=y$. This should not cause a problem, since alphabets of random
variables are always explicitly given and tacitly assumed to be
disjoint.

The conditioning operator has the lowest precedence, even lower than
that of comma, e.g., 
\[
  \ing(X,Y,A,B|u)=I(X; Y|Au)+I(X; Y|Bu)+I(A; B|u)-I(X; Y|u).
\]

\subsection{Random Variables}\label{s:prelim-rvs}
Suppose tuple of jointly distributed random variables
$\Xbf:=(X_{i}\st i=0,\dots,n-1)$ take values in alphabets $\Xsf_{i}$,
$i=0,\dots,n-1$, respectively.
Entropies of partial joints of random variables in the collection
$(X_{i}\st i=0,\dots,n-1)$ satisfy the series of so-called 
\emph{Shannon inequalities}
\begin{equation*}
  I(X_{I}; X_{J}|X_{K})\geq 0 
  \ 
  \text{for non-empty $I,J\in 2^{[n]}$  and any $K\in 2^{[n]}$}
\end{equation*}
(if $I=J$, then the inequality rewrites to $ H(X_{I} | X_{K})\geq 0 $; 
if $K$ is empty, then $X_{K}$ is trivial, and the inequality is
equivalent to $ I(X_{I}; X_{J})\geq 0 $).

An inequality for entropy quantities is called \emph{Shannon-type} if
it is valid for all polymatroids, equivalently, if it can be
represented as a positive linear combination of Shannon inequalities.
  
We say that collection $\Xbf$ is \emph{uniform on the support} if for
every $I\in 2^{[n]}$ the distribution of $X_{I}$ is uniform on its
support, that is, the probability mass function takes at most two
values, zero and some positive value.
\begin{remark}
  Recall that $X_{I}$ is a single random variable, where the
  marginalization structures of the collection are forgotten.  So,
  when we say that $X_{[n]}$ is uniform on the support, this does not
  imply uniformity of $X_{I}$ for $I\neq [n]$. However, when we say
  that the collection $\Xbf$ is uniform on the support, we mean that
  for every $I \in 2^{[n]}$ the distribution of $X_I$ is uniform on
  its support.
\end{remark}

The collection $\Xbf$ is said to be \emph{regular} if
for any disjoint subsets of indices $I,J\in 2^{[n]}$ and any two atoms
$\xbf,\ybf\in\Xsf_{J}$ the supports of $X_{I}|\xbf$ and $X_{I}|\ybf$
have the same cardinality.

It is immediate that if $\Xbf$ is uniform on the support, then it is
regular. In particular, $X_{I}|\xbf$ is distributed uniformly on the
support and $H(X_{I}|X_{J})=H(X_{I}|\xbf)$ for any $I,J\in2^{[n]}$ and
any atom $\xbf\in\Xsf_{J}$.

A random variable $X$ with alphabet $\Xsf$ and distribution $p$ is
called \emph{$\delta$-uniform}, $\delta\geq1$, if
\[
  \frac{\max_{x} p(x)}{\min_{x}p(x)}\leq \delta
\]
where the extrema are taken over the set of all atoms $x\in\Xsf$.

We say that a tuple of random variables $\Xbf$ is
\emph{$\delta$-regular} if for every disjoint $I,J\in 2^{{[n]}}$
\[
  \frac{\max_{\xbf}\sharp(X_{I}|\xbf)}{\min_{\xbf}\sharp(X_{I}|\xbf)}\leq\delta, 
\]
where $\sharp(X_{I}|\xbf)$ stands for the cardinality of the support
of the distribution of $X_{I}|\xbf$ and the extrema are taken over atoms
$\xbf\in\Xsf_{J}$.

We say that a collection $\Xbf$ is \emph{$\delta$-uniform} if $\Xbf$
is $\delta$-regular and for every $I\in2^{[n]}$ random variable
$X_{I}$ is $\delta$-uniform.  In the context where the particular
value of $\delta$ is not important or not specified, we say that a
tuple is \emph{almost-uniform} or \emph{almost-regular}, respectively.

The following simple lemma is left to the reader to prove.
\begin{lemma}\label{p:unif-reg}
  For a tuple of random variables $\Xbf=(X_{i}\st i=0,\dots,n-1)$ the
  following implications hold
  \begin{enumerate}[label=(\roman*)]
  \item\label{i:unif->reg} If $X_{I}$ is $\delta$-uniform for every
    $I\in 2^{[n]}$, then the whole tuple $\Xbf$ is
    $\delta^{2}$-uniform.
  \item\label{i:reg->unif} If $\Xbf$ is $\delta$-regular, and
    $X_{[n]}$ is $\epsilon$-uniform, then the whole collection $\Xbf$
    is $\epsilon\delta$-uniform.
  \end{enumerate}
  \unskipp
\end{lemma}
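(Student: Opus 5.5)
Both parts rest on one observation: the marginal probability of an atom of a sub-joint is the sum of the joint probabilities over its fibre. So that probability is squeezed between (fibre size)$\times$(minimal atom probability) and (fibre size)$\times$(maximal atom probability). Throughout, let $\sharp(X_I|\xbf)$ be the size of the support of $X_I$ conditioned on the atom $\xbf\in\Xsf_J$. Write $p_K$ for the probability mass function of $X_K$, with $m_K:=\min p_K$ and $M_K:=\max p_K$ taken over atoms. The uniformity hypothesis on $X_K$ reads $M_K\le\delta m_K$.

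For part \ref{i:unif->reg}, uniformity of each single $X_I$ with constant $\delta\le\delta^2$ is given, since $\delta\ge1$. It remains to show that $\Xbf$ is $\delta^2$-regular. Fix disjoint $I,J$ and an atom $\xbf\in\Xsf_J$. Then
\[
  p_J(\xbf)=\sum_{\mathbf{z}} p_{I\cup J}(\mathbf{z},\xbf),
\]
where the sum runs over the $\sharp(X_I|\xbf)$ atoms $\mathbf z$ of $X_I|\xbf$. Each term lies in $[m_{I\cup J},M_{I\cup J}]$, so
\[
  \frac{p_J(\xbf)}{M_{I\cup J}}\le\sharp(X_I|\xbf)\le\frac{p_J(\xbf)}{m_{I\cup J}} .
\]
Taking the ratio for two atoms $\xbf,\ybf$ gives
\[
  \frac{\sharp(X_I|\xbf)}{\sharp(X_I|\ybf)}\le\frac{p_J(\xbf)}{p_J(\ybf)}\cdot\frac{M_{I\cup J}}{m_{I\cup J}}\le\delta\cdot\delta ,
\]
using the $\delta$-uniformity of $X_J$ and of $X_{I\cup J}$. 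The degenerate cases $I=\emptyset$ or $J=\emptyset$ are trivial, because the fibre size is then constant.

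For part \ref{i:reg->unif}, the collection is $\delta$-regular and hence $\epsilon\delta$-regular, since $\epsilon\ge1$. So only the $\epsilon\delta$-uniformity of each $X_I$ must be checked. Fix $I$ and let $J=[n]\setminus I$, which is disjoint from $I$. For an atom $\xbf$ of $X_I$,
\[
  p_I(\xbf)=\sum_{\mathbf z}p_{[n]}(\xbf,\mathbf z),
\]
where the sum has $\sharp(X_J|\xbf)$ terms. This places $p_I(\xbf)$ in $[\sharp(X_J|\xbf)\,m_{[n]},\ \sharp(X_J|\xbf)\,M_{[n]}]$. Hence, for atoms $\xbf,\ybf$,
\[
  \frac{p_I(\xbf)}{p_I(\ybf)}\le\frac{\sharp(X_J|\xbf)}{\sharp(X_J|\ybf)}\cdot\frac{M_{[n]}}{m_{[n]}}\le\delta\epsilon ,
\]
by $\delta$-regularity and the $\epsilon$-uniformity of $X_{[n]}$. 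The case $I=[n]$, where $J$ is empty, gives just $\epsilon$.

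I expect no real obstacle here. The only point needing care is to use the correct complementary index set in each part. In \ref{i:unif->reg} one sums over $X_I$ inside $X_{I\cup J}$. In \ref{i:reg->unif} one sums over the complement $X_{[n]\setminus I}$, so that the regularity hypothesis applies to a disjoint pair of index sets.
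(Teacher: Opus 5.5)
Your proof is correct: the fibre-sum sandwich $\sharp\cdot m\le p\le\sharp\cdot M$ yields $\delta^{2}$ when you apply it to $X_{I\cup J}$ over atoms of $X_J$ in part (i). It yields $\epsilon\delta$ when you apply it to $X_{[n]}$ over atoms of $X_I$, with the complementary index set $[n]\setminus I$, in part (ii). The paper gives no proof of this lemma (it is explicitly left to the reader), and your argument is exactly the routine verification it has in mind.
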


We will also need the following elementary lemma.
\begin{lemma}\label{p:entropy-d-uniform}
  Let $X$ be a $\delta$-uniform random variable for some
  $\delta\geq1$  and with the support $\Xsf$.  Then
  \[
    \log|\Xsf|\geq H(X)\geq\log|\Xsf|-\log \delta.
  \]
  \unskipp
\end{lemma}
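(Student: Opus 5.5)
The upper bound $H(X)\le\log|\Xsf|$ is the standard fact that entropy is maximized by the uniform distribution on the support. I will invoke it directly, for instance via Jensen's inequality applied to the concave function $\log$:
\[
  H(X)=\sum_{x}p(x)\log\frac{1}{p(x)}\le\log\sum_{x}p(x)\cdot\frac{1}{p(x)}=\log|\Xsf|,
\]
where the sums run over the atoms $x\in\Xsf$. This part uses nothing about $\delta$.

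For the lower bound I will set $p_{\max}:=\max_x p(x)$ and $p_{\min}:=\min_x p(x)$, both taken over atoms. The first step is to control $p_{\max}$ from above using $\delta$-uniformity. Since the probabilities of the atoms sum to one, $1=\sum_x p(x)\ge|\Xsf|\,p_{\min}$. Combining this with $p_{\min}\ge p_{\max}/\delta$ gives $p_{\max}\le\delta/|\Xsf|$.

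The second step bounds the entropy from below by the min-entropy. Every term satisfies $\log(1/p(x))\ge\log(1/p_{\max})$, so
\[
  H(X)=\sum_x p(x)\log\frac1{p(x)}\ge\log\frac1{p_{\max}}\ge\log\frac{|\Xsf|}{\delta}=\log|\Xsf|-\log\delta .
\]

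There is no real obstacle here. The only point requiring care is that $\Xsf$ must be read as the support, i.e. the set of atoms, so that $|\Xsf|\,p_{\min}\le1$ is legitimate and the extrema in the definition of $\delta$-uniformity range over the same set.
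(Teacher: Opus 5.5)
Your proof is correct and follows the paper's argument essentially verbatim: from $1\ge|\Xsf|\,p_{\min}\ge|\Xsf|\,p_{\max}/\delta$ you get $p_{\max}\le\delta/|\Xsf|$, and then you bound $H(X)$ from below by the min-entropy $\log(1/p_{\max})$. The only difference is that you prove the upper bound $H(X)\le\log|\Xsf|$ via Jensen's inequality, whereas the paper simply cites it as standard.
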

This result might be well known, but for lack of a reference and since
the proof is just a one-liner, we prove it below.
\begin{proof}
  The upper bound $H(X) \leq \log|\Xsf|$ is standard.  To prove the
  lower bound, we let \( \pi(x) := \Pbb[X = x] \) denote the
  probability distribution of $X$ and observe that
 \[
   1 = \sum_{x\in \Xsf} \pi(x)
   \ge
   |\Xsf| \cdot \min_{x\in \Xsf} \pi(x)
   \ge
   |\Xsf| \cdot \max_{x\in \Xsf} \pi(x)/\delta
 \]
 It follows 
 \[
   H(X) = \sum_x \pi(x) \log \big(\pi(x)\big)^{-1}
   \ge
   \log \left( \max_x \pi (x)  \right)^{-1} 
   \ge \log ( |\Xsf| / \delta),
 \]
 which gives the lower bound.
\end{proof}

\subsection{Graphs}\label{s:graphs}
Let $\Gsf=(\Xsf\sqcup\Ysf, \Esf)$ be a biregular bipartite graph, 
and let 
\[
  d_{1}:=d_{1}(\Gsf):=|\Esf|/|\Xsf|
  \quad\text{and}\quad
  d_{2}:=d_{2}(\Gsf):=|\Esf|/|\Ysf|
\]
stand for the left and right degrees of $\Gsf$, respectively.
We
denote by $\lambda_{i}=\lambda_{i}(\Gsf)$, $i=1,\dots$ the eigenvalues
of the adjacency matrix of $G$ (the matrix has $|\Xsf| +|\Ysf|$ rows
and columns, equal to the number of vertices in the graph) indexed in
the decreasing order and counted with multiplicity, so that
$\lambda_{1}$ and $\lambda_{2}$ are the largest and the second largest
eigenvalues. For every biregular bipartite graph $\Gsf$ holds
\[
  \lambda_{1}(\Gsf)=\frac{|\Esf|}{\sqrt{|\Xsf|\cdot|\Ysf|}}=\sqrt{d_{1}\cdot
  d_2}.
\]

For a subgraph $\Hsf$ of $\Gsf$ denote by $\Xsf_{H}$,
$\Ysf_{H}$ and $\Esf_{H}$ the left part, the right part and the edges
of $\Hsf$, respectively. We use the connection between spectral and combinatorial
properties of graphs.  A central tool is the expander mixing lemma,
originally established by Alon and Chung \cite{alon1988explicit} (see
also the modern survey of Hoory, Linial, and Wigderson
\cite{hoory2006expander} for a standard exposition).  In this article,
we will use the bipartite version of the mixing lemma in the
formulation given in \cite{evra2015mixing}.

\begin{theorem}[Bipartite Expander Mixing Lemma, see
  \cite{evra2015mixing}]\label{p:mixing}\ \\
  For any induced subgraph $\Hsf$ of a biregular bipartite graph
  $\Gsf=(\Xsf\sqcup\Ysf, \Esf)$ with the second eigenvalue
  $\lambda_{2}=\lambda_{2}(\Gsf)$ holds
  \[
    \left|
      |\Esf_{\Hsf}|
      -
      |\Esf|\cdot\frac{|\Xsf_{\Hsf}|}{|\Xsf|}\cdot\frac{|\Ysf_{\Hsf}|}{|\Ysf|}
    \right|
    \leq
    \lambda_{2} \cdot |\Xsf_{\Hsf}|^{1/2} \cdot |\Ysf_{\Hsf}|^{1/2}
  \]
  \unskipp
\end{theorem}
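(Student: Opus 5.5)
The plan is the standard spectral proof of the expander mixing lemma, adapted to the bipartite biregular setting. Let $M$ be the $|\Xsf|\times|\Ysf|$ biadjacency matrix of $\Gsf$, so that the adjacency matrix is $\begin{pmatrix}0&M\\ M^{T}&0\end{pmatrix}$. Its eigenvalues are $\pm\sigma_i$, where $\sigma_1\ge\sigma_2\ge\dots$ are the singular values of $M$, together with zeros. Hence $\lambda_1=\sigma_1$ and $\lambda_2=\sigma_2$ (if $\lambda_2$ were $0$, then all $\sigma_i$ with $i\ge2$ vanish, which is still consistent).

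First I verify that $\sigma_1=\sqrt{d_1d_2}$, with left and right singular vectors $u_1=\mathbf 1_{\Xsf}/\sqrt{|\Xsf|}$ and $v_1=\mathbf 1_{\Ysf}/\sqrt{|\Ysf|}$. By biregularity, $M\mathbf 1_{\Ysf}=d_1\mathbf 1_{\Xsf}$ and $M^T\mathbf 1_{\Xsf}=d_2\mathbf 1_{\Ysf}$, so $Mv_1=\sqrt{d_1d_2}\,u_1$. Perron–Frobenius, or a direct bound $\|M\|\le\sqrt{\|M\|_{1}\|M\|_{\infty}}=\sqrt{d_1d_2}$, shows this is the top singular value. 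Consequently $M=\sigma_1u_1v_1^T+R$, where $R$ maps $v_1^\perp$ into $u_1^\perp$, kills $v_1$, and satisfies $\|R\|_{op}=\sigma_2=\lambda_2$.

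Next, for $S=\Xsf_{\Hsf}$ and $T=\Ysf_{\Hsf}$, the induced subgraph has $|\Esf_{\Hsf}|=\mathbf 1_S^TM\mathbf 1_T$. The main term is
\[
\sigma_1(\mathbf 1_S^Tu_1)(v_1^T\mathbf 1_T)=\sqrt{d_1d_2}\,\frac{|S|\,|T|}{\sqrt{|\Xsf||\Ysf|}}=|\Esf|\frac{|S|}{|\Xsf|}\frac{|T|}{|\Ysf|},
\]
using $|\Esf|=\sqrt{d_1d_2}\sqrt{|\Xsf||\Ysf|}$. The error is
\[
|\mathbf 1_S^TR\mathbf 1_T|=|s^TRt|\le\lambda_2\|s\|\|t\|,
\]
where $s,t$ are the projections of $\mathbf 1_S,\mathbf 1_T$ orthogonal to $u_1,v_1$. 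Finally, $\|s\|^2=|S|(1-|S|/|\Xsf|)\le|S|$, and similarly $\|t\|^2\le|T|$, which gives the stated bound, in fact slightly stronger.

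The only delicate point, and the one I expect to need the most care, is identifying $\lambda_2$ of the full adjacency matrix with the second singular value $\sigma_2$ of $M$. This requires that the spectrum be symmetric, which bipartiteness guarantees, and that $\lambda_2$ be the second-largest eigenvalue counted with multiplicity, not $-\lambda_1$. So if the graph is disconnected, $\sigma_2=\sigma_1$ and the bound remains trivially correct. Everything else is routine linear algebra.
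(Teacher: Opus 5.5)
Your proof is correct and is the standard spectral argument for the bipartite mixing lemma. The paper gives no proof of its own; it imports the statement from Evra--Golubev--Lubotzky. Your splitting of the biadjacency matrix into $\sigma_1u_1v_1^{T}$ on the constant vectors, plus a remainder of norm $\sigma_2$ on their orthogonal complements, is that standard argument, and it even yields the sharper factor $\sqrt{(1-|S|/|\Xsf|)(1-|T|/|\Ysf|)}$.

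The only caveat is your identification $\lambda_2=\sigma_2$. For $|\Xsf|=|\Ysf|=1$ (the graph $K_{1,1}$) the adjacency spectrum is $\set{1,-1}$, so $\lambda_2=-1$ and the displayed inequality itself fails. In every other case (taking $\sigma_2:=0$ when one side is a single vertex) your argument goes through verbatim.
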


\begin{corollary}\label{p:mixing-log}
  Let $\Hsf$ be a (not necessarily induced) subgraph of a biregular
  bipartite graph $\Gsf=(\Xsf\sqcup\Ysf, \Esf)$ with the second
  largest eigenvalue $\lambda_{2}=\lambda_{2}(\Gsf)$. Then at least
  one of the following alternatives holds:
  \begin{enumerate}[label=(\roman*)]
  \item\label{i:boring}
    $\log |\Esf_{\Hsf}|-\log |\Xsf_{\Hsf}|-\log |\Ysf_{\Hsf}| \leq \log |\Esf| - \log |\Xsf| -
    \log |\Ysf| + \log2$
  \item[or]
  \item\label{i:interesting}
    $\log |\Esf_{\Hsf}|-\frac12\log |\Xsf_{\Hsf}|-\frac12\log| \Ysf_{\Hsf}|
    \leq
    \log\lambda_{2}+\log 2$
  \end{enumerate}
  \unskipp
\end{corollary}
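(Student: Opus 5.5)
The plan is to reduce to the induced case and then apply Theorem~\ref{p:mixing}. Let $\Hsf'$ be the subgraph of $\Gsf$ induced on $\Xsf_{\Hsf}\sqcup\Ysf_{\Hsf}$. Then $\Esf_{\Hsf}\subseteq\Esf_{\Hsf'}$, so $|\Esf_{\Hsf}|\le|\Esf_{\Hsf'}|$, while the vertex sets of $\Hsf$ and $\Hsf'$ coincide. Both alternatives only get stronger when $|\Esf_{\Hsf}|$ is replaced by the larger quantity $|\Esf_{\Hsf'}|$. Hence it suffices to prove the corollary for $\Hsf'$, and I may assume that $\Hsf$ is induced.

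For induced $\Hsf$, Theorem~\ref{p:mixing} gives
\[
  |\Esf_{\Hsf}| \le a + b,
  \qquad
  a:=|\Esf|\cdot\frac{|\Xsf_{\Hsf}|\,|\Ysf_{\Hsf}|}{|\Xsf|\,|\Ysf|},
  \qquad
  b:=\lambda_{2}\,|\Xsf_{\Hsf}|^{1/2}|\Ysf_{\Hsf}|^{1/2}.
\]
Since $a+b\le 2\max(a,b)$, we get $|\Esf_{\Hsf}|\le 2a$ or $|\Esf_{\Hsf}|\le 2b$.

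If $|\Esf_{\Hsf}|\le 2a$, taking logarithms and moving $\log|\Xsf_{\Hsf}|+\log|\Ysf_{\Hsf}|$ to the left gives exactly alternative~\ref{i:boring}. If $|\Esf_{\Hsf}|\le 2b$, taking logarithms and subtracting $\tfrac12\log|\Xsf_{\Hsf}|+\tfrac12\log|\Ysf_{\Hsf}|$ gives alternative~\ref{i:interesting}.

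The only subtle point is the reduction from a non-induced subgraph to its induced hull. The key observation is that both inequalities are monotone in the edge count with the vertex sets held fixed. Degenerate cases with empty $\Xsf_{\Hsf}$ or $\Ysf_{\Hsf}$ can be excluded, or handled by the convention $\log 0=-\infty$. I expect no real obstacle.
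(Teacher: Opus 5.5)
Your proposal is correct and follows the paper's proof essentially step by step. Both pass to the induced closure, which has the same vertex sets and at least as many edges, then apply Theorem~\ref{p:mixing} and split according to which of the two terms $s$ and $t$ dominates, losing only a factor of $2$.
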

\begin{proof}
  Let $\bar\Hsf$ be the induced closure of $\Hsf$, that is an induced
  subgraph of $\Gsf$ with the same vertex set as $\Hsf$.
  Then we have
  \begin{equation}\label{eq:induced-sub}
    \begin{cases}
      |\Xsf_{\bar\Hsf}|=|\Xsf_{\Hsf}|\\
      |\Ysf_{\bar\Hsf}|=|\Ysf_{\Hsf}|\\
      |\Esf_{\bar\Hsf}|\geq|\Esf_{\Hsf}|
    \end{cases}
  \end{equation}
  Applying Theorem~\ref{p:mixing} to $\bar\Hsf$ we obtain
  \begin{equation}\label{eq:mixing}
    |\Esf_{\bar\Hsf}|\leq s + t
  \end{equation}
  where
  \begin{align*}
    s
    &:=
      |\Esf|\cdot\frac{|\Xsf_{\bar\Hsf}|}{|\Xsf|}
      \cdot\frac{|\Ysf_{\bar\Hsf}|}{|\Ysf|}\\
    t
    &:=
      \lambda_{2}\cdot |\Xsf_{\bar\Hsf}|^{1/2}
      \cdot|\Ysf_{\bar\Hsf}|^{1/2}
  \end{align*}
  Note that the logarithm of the sum of two numbers is equal to the
  logarithm of the maximum with error not exceeding $\log2$.  Thus we
  consider now two cases.
  \paragraph{Case 1: $\bm {t\leq s}$.} Then~\eqref{eq:mixing}
  implies
  \[
    \log|\Esf_{\bar\Hsf}|\leq \log2s
  \]
  After substituting $s$ and rearranging the terms we obtain
  \[
    \log |\Esf_{\bar\Hsf}|-\log |\Xsf_{\bar\Hsf}|-\log |\Ysf_{\bar\Hsf}|
    \leq \log |\Esf| - \log |\Xsf| - \log |\Ysf| + \log2
  \]
  After substitution~\eqref{eq:induced-sub} we get the
  alternative~\ref{i:boring} of the corollary.
  \paragraph{Case 2: $\bm {t>s}$.} In this case inequality~\eqref{eq:mixing}
  implies
  \[
    \log|\Esf_{\bar\Hsf}|\leq \log2t
  \]
  Likewise, we substitute the expression for $t$ and rearrange the summands
  to obtain
  \[
    \log |\Esf_{\bar\Hsf}|-\frac12\log |\Xsf_{\bar\Hsf}|-\frac12\log| \Ysf_{\bar\Hsf}|
    \leq
    \log\lambda_{2}+\log 2
  \]
  This, in turn, implies alternative~\ref{i:interesting} of the
  corollary in view of \eqref{eq:induced-sub}.
\end{proof}

If the density of the graph is bounded away from 1, the second largest
eigenvalue cannot be arbitrarily small.
The precise statements is
Proposition~\ref{p:alonlike} below. 
Recall that $\lambda_{1}(\Gsf)=\sqrt{d_{1}(\Gsf) \cdot d_{2}(\Gsf)}$ and its multiplicity
  is equal to one if and only if $\Gsf$ is connected.
\begin{proposition}\label{p:alonlike}
  For every $\epsilon>0$, there is $C>0$ such that for any biregular
  bipartite graph $\Gsf:=(\Xsf\sqcup\Ysf,\Esf)$ with
  \(
  |\Esf|\leq \exp(-\epsilon) \cdot  |\Xsf|\cdot|\Ysf|
  \)
  holds
  \begin{align*}
    \tag{i}\label{eq:alonlike-strong}
    2\log\lambda_{2}(\Gsf)
    &\geq
      \log\max\set{d_{1},d_{2}}-C,\\
    \tag{ii}\label{eq:alonlike-weak}
    2\log\lambda_{2}(G)
    &\geq\log\lambda_{1}(G) -C .
  \end{align*}
  \unskipp[-4ex]
\end{proposition}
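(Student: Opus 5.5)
The approach is a trace (second-moment) argument on the adjacency matrix. Let $M$ be the $|\Xsf|\times|\Ysf|$ biadjacency matrix, so the adjacency matrix of $\Gsf$ is $\begin{pmatrix}0&M\\ M^{T}&0\end{pmatrix}$. Its nonzero eigenvalues are $\pm\sigma_i$, where $\sigma_1\geq\sigma_2\geq\dots$ are the singular values of $M$. Hence $\lambda_1=\sigma_1=\sqrt{d_1d_2}$ and $\lambda_2=\sigma_2$ when there is more than one nonzero singular value; the degenerate case will be handled at the end. We compute
\[
  \sum_i\sigma_i^{2}=\operatorname{tr}(MM^{T})=|\Esf|,
\]
and the number of nonzero $\sigma_i$ is at most $\min\set{|\Xsf|,|\Ysf|}$. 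Therefore
\[
  |\Esf|-\lambda_1^{2}=\sum_{i\geq2}\sigma_i^{2}\leq \min\set{|\Xsf|,|\Ysf|}\cdot\lambda_2^{2}.
\]

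Next we bound $\lambda_1^{2}$ using the density hypothesis. Since $\lambda_1^{2}=|\Esf|^{2}/(|\Xsf||\Ysf|)$ and $|\Esf|\leq e^{-\epsilon}|\Xsf||\Ysf|$, we get $\lambda_1^{2}\leq e^{-\epsilon}|\Esf|$. Hence $\sum_{i\ge2}\sigma_i^2\geq(1-e^{-\epsilon})|\Esf|$. Assume without loss of generality $|\Xsf|\leq|\Ysf|$, so that $d_1\geq d_2$ and $|\Esf|/\min\set{|\Xsf|,|\Ysf|}=d_1=\max\set{d_1,d_2}$. Then
\[
  \lambda_2^{2}\geq(1-e^{-\epsilon})\max\set{d_1,d_2},
\]
which is (i) with $C=-\log(1-e^{-\epsilon})$. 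Part (ii) follows immediately, since $\lambda_1=\sqrt{d_1d_2}\leq\max\set{d_1,d_2}$.

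The one point needing care is making sure that $\lambda_2$, the second largest eigenvalue of the full adjacency matrix, really equals $\sigma_2$. The spectrum is symmetric and consists of $\pm\sigma_i$ together with zeros, so the second largest eigenvalue is $\sigma_2$ whenever $\sigma_2>0$. Indeed, the bound above shows $\sum_{i\geq2}\sigma_i^2>0$, so $\sigma_2>0$ and there is no degenerate case. If $\Gsf$ is disconnected, $\sigma_2=\sigma_1$ and the inequality is trivially consistent. I expect no serious obstacle beyond this bookkeeping and the choice of $\log$ base, since natural logarithms are used throughout.
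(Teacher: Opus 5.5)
Your proof is correct. The paper does not prove (i) itself: it quotes it from Corollary~4 of \cite{hoholdt2012eigenvalues}, and then deduces (ii) from (i) via $\lambda_1=\sqrt{d_1d_2}\le\max\{d_1,d_2\}$, exactly as you do.

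What you add is a self-contained, elementary derivation of (i). You combine the trace identity $\sum_i\sigma_i^2=\operatorname{tr}(MM^{T})=|\Esf|$ with the density hypothesis. The hypothesis turns $\lambda_1^2=|\Esf|^2/(|\Xsf|\,|\Ysf|)$ into $\lambda_1^2\le e^{-\epsilon}|\Esf|$, which forces at least a $(1-e^{-\epsilon})$-fraction of $\sum_i\sigma_i^2$ onto $\sigma_2,\sigma_3,\dots$. Dividing by the number of such terms, at most $\min\{|\Xsf|,|\Ysf|\}$, gives the bound.

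The citation keeps the paper short but leaves $C$ implicit. Your argument is valid for every biregular bipartite graph with at least one edge and gives the explicit constant $C=-\log(1-e^{-\epsilon})$. This matches the $-C_2\log\bigl(1-\exp(-\epsilon_0)\bigr)$ dependence on the threshold that the authors anticipate in their discussion of explicit constants.

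Your identification $\lambda_2=\sigma_2$ is also correct and not circular. The lower bound $\sum_{i\ge2}\sigma_i^2\ge(1-e^{-\epsilon})|\Esf|>0$ does not involve $\lambda_2$, so it already guarantees $\sigma_2>0$.
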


The second inequality in the proposition above follows immediately
from the first and is very similar in spirit to the Alon--Boppana or
A.~Nilli bounds, \cite{alon1986eigenvalues,nilli1991second}.  The
statement of Proposition~\ref{p:alonlike}\eqref{eq:alonlike-strong} in
the required form follows from
\cite[corollary~4]{hoholdt2012eigenvalues}.

\subsection{Graphs and random variables}\label{s:graphs-rvs}
Let $\Gbf:=(X,Y)$ be a pair of uniform on the support random
variables. We associate with $\Gbf$ a bi-regular bipartite graph
$\Gsf:=(\Xsf\sqcup\Ysf, \Esf)$, where the left part is the support
$\Xsf$ of $X$, the right part is the support $\Ysf$ of $Y$, and
$\Esf:=\set{(x,y)\st p(x,y)>0}$. We say that $\Gbf$ is
\emph{supported} on $\Gsf$, or $\Gsf$ is the \emph{support} of $\Gbf$,
and write
\[
  \Gsf=\supp\Gbf=\supp(X,Y)
\]

Since the pair $\Gbf$ is uniform on its support, we have
\begin{equation}
  \label{eq:graph-rv}
  \begin{aligned}
  &H(X)=\log|\Xsf|,
  && H(XY)=\log|\Esf|,\\
  &H(Y)=\log|\Ysf|,
  &&\L(X,Y)=\log\lambda_{1}(\Gsf).
\end{aligned}
\end{equation}

Recall that $\lambda_{2}(\Gsf)$ stands for the second largest
eigenvalue of the graph. The next Corollary follows from
Proposition~\ref{p:alonlike}\eqref{eq:alonlike-weak} by applying it to
the biregular bipartite graph supporting $\Gbf$ and using
the identities~\eqref{eq:graph-rv}.
\begin{corollary}\label{p:alonlike-rv} For every $\epsilon>0$, there
  is a constant $C=C(\epsilon)$, such that for any pair $\Gbf=(X,Y)$
  uniformly supported on $\Gsf$ with
  $I(X; Y)\geq\epsilon$ holds
  \[
    2\log\lambda_{2}(\Gsf)
    \geq
    \L(X,Y) - C.
  \]
  \unskipp
\end{corollary}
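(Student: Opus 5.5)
The plan is to translate Corollary~\ref{p:alonlike-rv} into the graph statement Proposition~\ref{p:alonlike}\eqref{eq:alonlike-weak} and then rewrite the result in entropic terms using the identities~\eqref{eq:graph-rv}.

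First, I will convert the hypothesis. Since $\Gbf=(X,Y)$ is uniform on its support $\Gsf=(\Xsf\sqcup\Ysf,\Esf)$, we have
\[
  I(X;Y)=\log\frac{|\Xsf|\cdot|\Ysf|}{|\Esf|}.
\]
So the condition $I(X;Y)\geq\epsilon$ is exactly the density condition $|\Esf|\leq\exp(-\epsilon)\cdot|\Xsf|\cdot|\Ysf|$ required in Proposition~\ref{p:alonlike}.

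Second, I will apply Proposition~\ref{p:alonlike}\eqref{eq:alonlike-weak} with this $\epsilon$. It yields a constant $C=C(\epsilon)$, independent of the graph, such that
\[
  2\log\lambda_2(\Gsf)\geq\log\lambda_1(\Gsf)-C.
\]

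Third, I will substitute $\log\lambda_1(\Gsf)=\L(X,Y)$ from~\eqref{eq:graph-rv}. To check this identity, note that $\L(X,Y)=\frac12 H(X)+\frac12 H(Y)-I(X;Y)$. Plugging in $H(X)=\log|\Xsf|$, $H(Y)=\log|\Ysf|$ and the expression for $I(X;Y)$ above gives
\[
  \L(X,Y)=\log\bigl(|\Esf|/\sqrt{|\Xsf|\cdot|\Ysf|}\bigr)=\log\lambda_1(\Gsf).
\]
This completes the argument.

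The only step with real content is Proposition~\ref{p:alonlike} itself, which the paper takes from the literature, so nothing here should be an obstacle. The one point to check is that the constant depends only on $\epsilon$ and not on $\Gsf$, and this is guaranteed by the uniform quantifier order in Proposition~\ref{p:alonlike}.
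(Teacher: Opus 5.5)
Your proposal is correct and follows exactly the paper's route. You translate $I(X;Y)\geq\epsilon$ into the density condition $|\Esf|\leq\exp(-\epsilon)\cdot|\Xsf|\cdot|\Ysf|$, apply Proposition~\ref{p:alonlike}\eqref{eq:alonlike-weak}, and substitute $\log\lambda_1(\Gsf)=\L(X,Y)$ from~\eqref{eq:graph-rv}; your explicit check of both conversions just fills in what the paper leaves as a one-line remark.
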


\section{Ingleton bounds for almost-uniform tuples}\label{s:ing-q-unif}
In this section we consider a pair $\Gbf=(X,Y)$ uniform on its support
and derive lower bound on $\ing(X,Y,A,B)$ under suitable assumptions
on the extending variables $(A,B)$. We will show that $\ing(X,Y,A,B)$
is bounded from below in terms of the spectrum of the graph $\Gsf$ supporting $\Gbf$.
  
Note that for all $(X,Y,A,B)$
\[
  \ing(X,Y,A,B)\geq -I(X; Y)
\]
Therefore, if $I(X; Y)$ is small, then it is trivial that
$\ing(X,Y,A,B)$ is already ``not too negative''. We therefore fix some $\epsilon_{0}>0$ (say, $\epsilon_{0}=1$)
and \textbf{assume throughout the article}, that
\begin{equation}
  \tag{$*$}\label{eq:I(X; Y)>C}
  I(X; Y)\geq\epsilon_{0}
\end{equation}
Denote by $C_{0}$ the constant provided by
Proposition~\ref{p:alonlike} or Corollary~\ref{p:alonlike-rv} for the
chosen value $\epsilon=\epsilon_{0}$, so that under the assumption
\eqref{eq:I(X; Y)>C} we have the bound
\begin{equation}
  \tag{$**$}\label{eq:assumption}
  2\log\lambda_2(\Gsf)
  \geq
  \log\lambda_{1}(\Gsf) - C_{0}
  =\L(X,Y)-C_{0},
\end{equation}
where $L(X,Y)$ is computed for the uniform random pair $(X,Y)$ supported on~$\Gsf$.

Recall that throughout the paper the $O(\cdot)$-notation is understood so that
$0\leq O(x)\leq C_{1}\cdot x + C_{2}$,
where $C_{1}, C_{2}\ge 0$ are constants depending on the context.
Thus, the $O(\cdot)$-terms may depend implicitly on the chosen threshold $\epsilon_{0}$ (via
$C_{0}$) but on no other implicit parameters. Later,
in Section~\ref{s:explict-constants}, we briefly discuss the
possibility of obtaining bounds with explicit constants and outline
the main obstacles to this goal.

We first assume that the entire quadruple $(X,Y,A,B)$ is
uniform on its support, and then extend the argument to a more general
setting.  The lemma and proposition that apply only to uniform
distributions (proven in Section~\ref{s:ing-unif}) will not be used
directly in the proof of our main result;  rather, we present this
restricted version of the argument to illustrate the main ideas of the
spectral technique in a simpler setting and to provide a gentle
introduction to the proof of the main theorem.
  
\subsection{Uniform tuples}\label{s:ing-unif}
\begin{lemma}\label{p:unif}
  Let $\Gbf=(X,Y)$ be a pair of random variables uniform on its
  support graph $\Gsf$. Let $(X,Y,A)$ be its uniform on its
  support extension. Then at least one of the two conditions below
  holds true. Either
  \begin{enumerate}[label=(\roman*)]
  \item \label{i:unif-boring}
    $I(X; Y|A)\geq I(X; Y)-\log 2$
  \item[or]
  \item\label{i:unif-interesting}
    $\L(X,Y|A)\leq
    \log\lambda_{2}(\Gsf) + \log2$.
  \end{enumerate}
  \unskipp
\end{lemma}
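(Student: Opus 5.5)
Fix an atom $a$ of $A$ and take $\Hsf_a$ to be the support of the conditional pair $(X,Y)|a$. Its edge set is $\Esf_{\Hsf_a}=\supp(XY|a)\subseteq\Esf$, so $\Hsf_a$ is a (not necessarily induced) subgraph of $\Gsf$, and we apply Corollary~\ref{p:mixing-log} to it. The triple $(X,Y,A)$ is uniform on its support, hence regular, and all conditional quantities are computed by cardinalities that do not depend on the atom $a$:
\[
  H(X|A)=\log|\Xsf_{\Hsf_a}|,\qquad H(Y|A)=\log|\Ysf_{\Hsf_a}|,\qquad H(XY|A)=\log|\Esf_{\Hsf_a}|.
\]
These hold because $XA$, $YA$, $XYA$ and $A$ are each uniform on their supports, so for example $H(X|A)=H(XA)-H(A)$ is the logarithm of the common fiber size $\sharp(X|a)$. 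We also use that $\Xsf_{\Hsf_a}$ equals $\supp(X|a)$, which is just the projection of the edge set. Consequently every atom $a$ gives the same numbers, and it suffices to run the argument for one fixed $a$.

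With these identities the two alternatives of Corollary~\ref{p:mixing-log} translate directly. Alternative~\ref{i:boring} reads
\[
  H(XY|A)-H(X|A)-H(Y|A)\le H(XY)-H(X)-H(Y)+\log2,
\]
that is, $-I(X;Y|A)\le -I(X;Y)+\log 2$. This is exactly condition~\ref{i:unif-boring}. For alternative~\ref{i:interesting}, note that
\[
  H(XY|A)-\tfrac12H(X|A)-\tfrac12H(Y|A)=\tfrac12H(Y|XA)+\tfrac12H(X|YA)=\L(X,Y|A).
\]
This follows from the chain rule $H(XY|A)=H(X|A)+H(Y|XA)=H(Y|A)+H(X|YA)$, averaged. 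So alternative~\ref{i:interesting} becomes $\L(X,Y|A)\le\log\lambda_2(\Gsf)+\log2$, which is condition~\ref{i:unif-interesting}.

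There is no real obstacle here. The only point needing care is the bookkeeping in the first step: the conditional entropies must be log-cardinalities of the vertex and edge sets of a single subgraph $\Hsf_a$. This is where uniformity of the whole triple $(X,Y,A)$ is used, rather than only uniformity of $XYA$. With an almost-uniform triple one would pick up additive $\log\delta$ errors via Lemma~\ref{p:entropy-d-uniform}.
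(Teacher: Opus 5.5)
Your proposal is correct and follows the paper's proof essentially step for step. You fix an atom $a$, use uniformity of the whole triple to identify $H(X|A)$, $H(Y|A)$, $H(XY|A)$ with the log-cardinalities of the parts and edges of the subgraph $\Hsf_a$, and apply Corollary~\ref{p:mixing-log}. Your explicit check that $H(XY|A)-\tfrac12H(X|A)-\tfrac12H(Y|A)=\L(X,Y|A)$ simply spells out the ``rearranging the terms'' that the paper leaves to the reader.
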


\begin{proof}
  Let $\Gsf=(\Xsf\sqcup\Ysf,\Esf)$ be a bipartite graph supporting the
  pair $\Gbf$.  For an atom $a\in\Asf$, the support of the distribution
  $(X,Y|a)$ is a subgraph
  \[
    \Hsf_{a}=(\Xsf_{a}\sqcup\Ysf_{a},\Esf_{a})\subset\Gsf.
  \]
  Because of uniformity of the triple, the distribution $(X,Y|a)$
  is also uniform on its support.  Uniformity of the triple also implies that
  $H(X|a)$, $H(Y|a)$ and $H(XY|a)$ are independent of
  $a\in\Asf$ and therefore equal to $H(X|A)$, $H(Y|A)$ and $H(XY|A)$,
  respectively. Thus
  \begin{equation*}
    \begin{split}
      H(X|A)&=H(X|a)=\log|\Xsf_{a}|\\
      H(Y|A)&=H(Y|a)=\log|\Ysf_{a}|\\
      H(XY|A)&=H(XY|a)=\log|\Esf_{a}|\\
    \end{split}
  \end{equation*}
  Apply Corollary~\ref{p:mixing-log} to the subgraph $\Hsf_{a}$ using
  the entropy values above. After rearranging the terms we obtain the
  required inequalities.
\end{proof}

\begin{proposition}\label{p:ing-unif}
  Let the quadruple $(X,Y,A,B)$ be uniform on its support and the pair $(X,Y)$
  satisfy condition~\eqref{eq:I(X; Y)>C} on page~\pageref{eq:I(X; Y)>C}.  Then
  \[
    \ing(X,Y,A,B)
    \geq
    -2\log\lambda_{2}(\Gsf)+\log\lambda_{1}(\Gsf)-O(1)
  \]
  where $\Gsf:=\supp (X,Y)$.
\end{proposition}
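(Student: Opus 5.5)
Apply Lemma~\ref{p:unif} to the uniform triples $(X,Y,A)$ and $(X,Y,B)$; both are uniform because they are sub-collections of the uniform quadruple. Each lemma gives two alternatives, so there are four cases. Throughout write $L:=\L(X,Y)=\log\lambda_1(\Gsf)$ and $\ell:=\log\lambda_2(\Gsf)$. The target bound is $\ing\geq L-2\ell-O(1)$. By~\eqref{eq:assumption} we have $L-2\ell\leq C_0$, so the target is never larger than $C_0$. Hence any case in which we get $\ing\geq -O(1)$ is already done.

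\emph{Easy cases.} If alternative~\ref{i:unif-boring} holds for $A$, then
\[
  \ing = I(X;Y|A)-I(X;Y) + I(X;Y|B)+I(A;B) \geq -\log 2,
\]
since the last two terms are nonnegative. This is $\geq L-2\ell-C_0-\log 2$, as required. The same argument works if alternative~\ref{i:unif-boring} holds for $B$.

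\emph{Main case.} It remains to treat the case where alternative~\ref{i:unif-interesting} holds for both, i.e. $\L(X,Y|A)\leq \ell+\log 2$ and $\L(X,Y|B)\leq\ell+\log 2$. This is where a Shannon-type inequality is needed. I expect it to have the form
\[
  \ing(X,Y,A,B)\geq \L(X,Y)-\L(X,Y|A)-\L(X,Y|B).
\]
Substituting the two bounds then gives $\ing\geq L-2\ell-2\log 2$, which finishes the proof.

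To verify this inequality, expand both sides. Write $\L(X,Y|A)=\tfrac12 H(X|A)+\tfrac12 H(Y|A)-I(X;Y|A)$, and similarly for $B$ and for the unconditional $\L(X,Y)$. The $I(X;Y|A)$, $I(X;Y|B)$ and $I(X;Y)$ terms cancel between the two sides, leaving the claim
\[
  I(A;B)+\tfrac12\bigl(H(X|A)+H(X|B)-H(X)\bigr)+\tfrac12\bigl(H(Y|A)+H(Y|B)-H(Y)\bigr)\geq 0.
\]
It therefore suffices to show $I(A;B)+H(X|A)+H(X|B)\geq H(X)$, and the same with $Y$ in place of $X$; averaging the two gives the claim. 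For the $X$ version:
\[
  H(X)\leq H(X|A)+I(X;A),\qquad I(X;A)\leq I(XB;A)=I(B;A)+I(X;A|B)\leq I(A;B)+H(X|B).
\]
Combining these proves it.

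The main obstacle is finding this Shannon inequality and checking it; the rest is bookkeeping of constants.
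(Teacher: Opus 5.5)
Your proposal is correct and follows the paper's proof. The paper makes the same two-case split via Lemma~\ref{p:unif}: if alternative~\ref{i:unif-boring} holds for $A$ or $B$, then $\ing\geq-\log 2$, which is absorbed using~\eqref{eq:assumption}; otherwise the case is closed by the Shannon-type inequality of Lemma~\ref{p:shannon-ineq}\ref{i:ing-lll}. The paper derives that inequality exactly as you do, by averaging the instances $W:=X$ and $W:=Y$ of $I(A;B)\geq H(W)-H(W|A)-H(W|B)$.
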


\begin{proof}
  To prove the proposition consider the triples $(X,Y,A)$ and
  $(X,Y,B)$ and apply Lemma~\ref{p:unif} to each.  Consider two cases.
  \paragraph{Case 1: At least one of the following two inequalities
    hold.}
  \begin{equation*}
    \left[
      \begin{aligned}
        I(X; Y|A) &\geq I(X; Y)-\log2\quad\text{or}\\
        I(X; Y|B) &\geq I(X; Y)-\log2
      \end{aligned}
    \right.
  \end{equation*}
  Substituting one of these two conditions in the expression for
  $\ing(X,Y,A,B)$ and dropping the non-negative terms we get
  \[
    \ing(X,Y,A,B)\geq -\log2.
  \]
  Together with the inequality~\eqref{eq:assumption} on
  page~\pageref{eq:assumption} we obtain the conclusion of the
  proposition in this case.
  \paragraph{Case 2: Both of the following two inequalities hold.}
  \begin{equation*}
    \left\{
      \begin{aligned}
        \L(X,Y|A)
        &\leq
          \log\lambda_{2}(\Gsf) + \log2
          \quad\text{and}\\
        \L(X,Y|B)
        &\leq
          \log\lambda_{2}(\Gsf) + \log2.
      \end{aligned}
    \right.
  \end{equation*}
  These inequalities together with inequality~\ref{p:shannon-ineq}\ref{i:ing-lll} give
  \begin{align*}
    \ing(X,Y,A,B)
    &\geq
      -\L(X,Y|A)-\L(X,Y|B)+\L(X,Y)\\
    &\geq -2\log\lambda_{2}(\Gsf)+ \log\lambda_{1}(\Gsf) - \log4 .
  \end{align*}
  \unskipp
\end{proof}

\subsection{Almost-uniform tuples}\label{ss:ing-q-unif}
The argument presented in the previous section extends, with only
minimal modifications, to a more general setting.  First, the
uniformity assumption can be relaxed to $\delta$-\emph{uniformity}.
Second, instead of the second largest eigenvalue of the graph $\Gsf$
supporting $\Gbf$, we may take the second eigenvalue of a bigger
graph, which contains $\Gsf$.  To make this observation precise, we
introduce the following definition.
\begin{definition}
  Let $\Gsf=(\Xsf\sqcup\Ysf,\Esf)$ be a biregular bipartite graph.
  We say that a pair of random variables $\Gbf'=(X',Y')$ with joint
  distribution $p$ is \emph{subsupported} on $\Gsf$, written
  \[
    \supp\Gbf'\subset\Gsf,
  \]
  if the supports of $X'$ and $Y'$ are subsets of $\; \Xsf$ and
  $\,\Ysf$, respectively, and $p(x',y')>0$ implies $(x',y')\in\Esf$.
  The marginal distributions of $X'$ and $Y'$ need not be fully
  supported on $\Xsf$ and $\Ysf$, respectively, and the support of the
  joint $X'Y'$ need not be equal to $\Esf$.
\end{definition}
Let $\Gbf'=(X',Y')$ be a $\delta$-uniform pair of random variables
subsupported on a biregular bipartite graph
$\Gsf=(\Xsf\sqcup\Ysf,\Esf)$. We show below that the proofs of
Lemma~\ref{p:unif} and Proposition~\ref{p:ing-unif} remain valid, with
minor adaptations, if the second largest eigenvalue of the support graph is replaced by
the second largest eigenvalue of a bigger graph.

\begin{lemma}\label{p:q-unif}
  Let $\Gbf'=(X',Y')$ be a $\delta$-uniform pair of random
  variables subsupported on a biregular bipartite graph
    $\Gsf=(\Xsf\sqcup\Ysf,\Esf)$.  Let $(X',Y',A')$ be a
  $\delta$-uniform extension of $\Gbf'$. Then at least one of the
  two conditions below holds:
  \begin{enumerate}[label=(\roman*)]
  \item \label{i:q-unif-boring}
    $\displaystyle I(X'; Y'|A') \geq \log\frac{|\Xsf| \cdot |\Ysf|}{|\Esf|}-(\log2+ 5 \log\delta)$
  \item[or]
  \item\label{i:q-unif-interesting}
    $\L(X',Y'|A')
    \leq
    \log\lambda_{2}(\Gsf) + \log2 + 4\log\delta$
  \end{enumerate}
\end{lemma}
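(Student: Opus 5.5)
I will mimic the proof of Lemma~\ref{p:unif}, replacing exact entropy identities by the two-sided estimates of Lemma~\ref{p:entropy-d-uniform}. Fix an atom $a\in\Asf'$ and let $\Hsf_a=(\Xsf_a\sqcup\Ysf_a,\Esf_a)$ be the support of $(X',Y'|a)$. Since $\supp\Gbf'\subset\Gsf$, the graph $\Hsf_a$ is a (not necessarily induced) subgraph of $\Gsf$, so Corollary~\ref{p:mixing-log} applies to it directly. This gives either alternative~\ref{i:boring}, $\log|\Esf_a|-\log|\Xsf_a|-\log|\Ysf_a|\le\log|\Esf|-\log|\Xsf|-\log|\Ysf|+\log2$, or alternative~\ref{i:interesting}, $\log|\Esf_a|-\tfrac12\log|\Xsf_a|-\tfrac12\log|\Ysf_a|\le\log\lambda_2(\Gsf)+\log2$. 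It remains to translate the sizes $|\Xsf_a|,|\Ysf_a|,|\Esf_a|$ into the conditional entropies $H(X'|A')$, $H(Y'|A')$, $H(X'Y'|A')$, losing only $O(\log\delta)$.

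For this translation I first show that the conditional distributions are nearly uniform. For any $I\subseteq\{X',Y'\}$, we have $p(x_I\mid a)=p(x_I,a)/p(a)$. Because $X'_I A'$ is $\delta$-uniform, the ratio of the largest to the smallest value of $p(x_I\mid a)$ over atoms is at most $\delta$. Hence $X'_I|a$ is $\delta$-uniform, and Lemma~\ref{p:entropy-d-uniform} gives $\log\sharp(X'_I|a)-\log\delta\le H(X'_I|a)\le\log\sharp(X'_I|a)$. Next, $\delta$-regularity says that $\sharp(X'_I|a)$ varies with $a$ by a factor of at most $\delta$. Averaging over $a$ therefore shows that $H(X'_I|A')$ lies within $[\log\sharp(X'_I|a)-2\log\delta,\ \log\sharp(X'_I|a)+\log\delta]$ for every atom $a$. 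I will pick one atom $a$ and use these estimates throughout. The bookkeeping may be done slightly differently to match the paper's exact constants $5\log\delta$ and $4\log\delta$.

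Now I substitute. Alternative~\ref{i:interesting} becomes an upper bound on $H(X'Y'|A')-\tfrac12H(X'|A')-\tfrac12H(Y'|A')=\L(X',Y'|A')$ by $\log\lambda_2+\log2$ plus an error. That error collects the upper error $\log\delta$ on the $E$-term and $\tfrac12\cdot2\log\delta$ from each of the two vertex terms, giving a few multiples of $\log\delta$, which should land at $4\log\delta$ or less. Alternative~\ref{i:boring} becomes a lower bound $I(X';Y'|A')=H(X'|A')+H(Y'|A')-H(X'Y'|A')\ge\log\frac{|\Xsf||\Ysf|}{|\Esf|}-\log2-c\log\delta$. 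Here the error collects $2\log\delta$ from each vertex term and $\log\delta$ from the $E$-term, again $\le5\log\delta$.

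The main obstacle is exactly this constant bookkeeping. One has to check carefully which direction of Lemma~\ref{p:entropy-d-uniform} and of the regularity averaging each term needs, so that the accumulated losses fit within $5\log\delta$ and $4\log\delta$. The conceptual content is otherwise identical to Lemma~\ref{p:unif}. Note also that in alternative~\ref{i:boring} the right-hand side involves the big graph $\Gsf$ rather than the support of $\Gbf'$, which is consistent with how Corollary~\ref{p:mixing-log} was stated.
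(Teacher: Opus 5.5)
Your proposal is correct and is essentially the paper's argument: apply Corollary~\ref{p:mixing-log} to the conditional support graphs $\Hsf_a\subset\Gsf$, then convert $|\Xsf_a|,|\Ysf_a|,|\Esf_a|$ into conditional entropies using Lemma~\ref{p:entropy-d-uniform} and $\delta$-regularity. The only difference is organizational: the paper argues by contrapositive through $I_{\max}$ over all atoms, whereas you case-split on a single fixed atom, and your bookkeeping gives exactly $5\log\delta$ in alternative~(i) and $3\log\delta\le 4\log\delta$ in alternative~(ii).
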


\begin{proof}
  As usual, let
  \[
    \Xsf':=\supp X'\subset\Xsf,\quad
    \Ysf':=\supp Y'\subset\Ysf,\quad
    \Esf':=\supp X'Y'\subset\Esf
  \]
  Further, for each atom $a\in\Asf'$ we
  denote by $\Xsf_{a}$, $\Ysf_{a}$, and $\Esf_{a}$ the supports of the
  distributions of $X'|a$, $Y'|a$ and $X'Y'|a$, respectively, and by
  $\Hsf_{a}:=(\Xsf_{a}\sqcup\Ysf_{a},\Esf_{a})$ the subgraph of $\Gsf$.
  Since the triple $(X',Y',A')$ is $\delta$-uniform, the cardinalities of
  $\Xsf_{a}$, $\Ysf_{a}$, and $\Esf_{a}$ for different choices of $a$
  differ at most by a factor of $\delta$.  Define
  \begin{equation}
      I_{\max}
      :=
      \max\set{\log|\Xsf_{a}|+\log|\Ysf_{a}|-\log|\Esf_{a}|\st
        a\in \Asf'}.      
  \end{equation}
  For any atom $a\in\Asf'$ we have the bounds
  \[
    I_{\max}\geq\log|\Xsf_{a}|+|\log\Ysf_{a}|-\log|\Esf_{a}|\geq I_{\max}-3\log\delta.
  \]
  By Lemma~\ref{p:entropy-d-uniform} we then have
  \begin{align*}
    I(X'; Y'|A')
    &=
      \sum_{a\in\Asf'}p(a)I(X'; Y'|a) \\
    &=
      \sum_{a\in\Asf'}p(a) \big( H(X'|a) + H(Y'|a) - H(X'Y'|a) \big) \\
    &\geq
      \sum_{a\in\Asf}p(a)
      \left(
      \log|\Xsf_{a}| - \log\delta +|\log\Ysf_{a}| - \log\delta
      -\log|\Esf_{a}|
      \right) \\
    &\geq
      I_{\max}-5\log\delta.
  \end{align*}
  Suppose alternative~\ref{i:q-unif-boring} does not hold, that is
  \[
    I(X'; Y'|A')
    <
    \log\frac{|\Xsf| \cdot |\Ysf|}{|\Esf|}-(\log2+5\log\delta).
  \]
  Hence, for every $a\in\Asf'$
  \begin{align*}
    \log|\Xsf_{a}|+|\log\Ysf_{a}|-\log|\Esf_{a}|
    &\leq I_{\max}\leq I(X'; Y'|A') + 5\log\delta \\
    &<\log\frac{|\Xsf| \cdot |\Ysf|}{|\Esf|} -\log2\\
    &\le\log|\Xsf|+|\log\Ysf|-\log|\Esf| -\log2. 
  \end{align*}
  Therefore, for any atom $a\in\Asf'$ the subgraph $\Hsf_{a}$ satisfies
  alternative~\ref{i:interesting} of Corollary~\ref{p:mixing-log}
  applied to the graph $\Gsf$.
  That is, 
  \[
    \log |\Esf_a|-\frac12\log |\Xsf_a|-\frac12\log| \Ysf_a|
    \leq
    \log\lambda_{2}(\Gsf)+\log 2.
  \]
  On the other hand, for every $a\in\Asf'$ we have
  \begin{equation*}
    \begin{aligned}
      \big|\log |\Esf_a|-H(X'Y'|A')\big|&\leq 2\log\delta,\\
      \big|\log |\Xsf_a|-H(X'|A')\big|&\leq 2\log\delta,\\
      \big|\log |\Ysf_a|-H(Y'|A')\big|&\leq 2\log\delta.
    \end{aligned}
  \end{equation*}
  Therefore, 
  \begin{equation*}
    \begin{aligned}
      \L(X',Y'|A') 
      &=H(X'Y'|A')-\frac12H(X'|A')-\frac12H(Y'|A')\\
      &\leq
        \log\lambda_{2}(\Gsf)+\log 2+4\log\delta.
    \end{aligned}
  \end{equation*}
  Thus, we get the conclusion~\ref{i:q-unif-interesting} of
  Lemma~\ref{p:q-unif}.
\end{proof}

\begin{proposition}\label{p:ing-q-unif}
  Let $\Gbf'=(X',Y')$ be a $\delta$-uniform pair of random variables,
  and suppose that $\Gbf'$ is subsupported on a biregular bipartite
  graph $\Gsf=(\Xsf\sqcup\Ysf,\Esf)$ that satisfies
  condition~\eqref{eq:assumption} on page~\pageref{eq:assumption}.
  Assume that the quadruple $(X',Y',A',B')$ is a $\delta$-uniform
  extension of $\Gbf'$. Then
  \[
    \ing(X',Y',A',B')\geq
    -2\log\lambda_{2}(\Gsf) + \L(X',Y')-O(\log\delta).
  \]
  \unskipp[-7ex]
\end{proposition}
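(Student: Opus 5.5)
We mirror the proof of Proposition~\ref{p:ing-unif}, replacing Lemma~\ref{p:unif} by Lemma~\ref{p:q-unif}. Since $(X',Y',A',B')$ is $\delta$-uniform, its subtuples $(X',Y',A')$ and $(X',Y',B')$ are $\delta$-uniform extensions of $\Gbf'$. We apply Lemma~\ref{p:q-unif} to each of them, relative to the same ambient graph $\Gsf$, and split into two cases.

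\textbf{Case 1: alternative~\ref{i:q-unif-boring} holds for $A'$ (or symmetrically for $B'$).} Then
\[
I(X';Y'|A')\geq \log\frac{|\Xsf|\cdot|\Ysf|}{|\Esf|}-\log2-5\log\delta .
\]
The right-hand side should be compared with $I(X';Y')$, and this comparison is the step needing care. Since $\Gbf'$ is only subsupported on $\Gsf$, $I(X';Y')$ could a priori exceed $\log(|\Xsf||\Ysf|/|\Esf|)$. To handle this, we apply Corollary~\ref{p:mixing-log} to the subgraph $\supp\Gbf'=(\Xsf'\sqcup\Ysf',\Esf')$, and use Lemma~\ref{p:entropy-d-uniform} to pass between entropies and log-cardinalities at a cost of $O(\log\delta)$. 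This gives two sub-alternatives.
\begin{itemize}[label=$\circ$]
\item If alternative~\ref{i:boring} holds, then $I(X';Y')\leq \log(|\Xsf||\Ysf|/|\Esf|)+\log 2+O(\log\delta)$. Dropping the nonnegative terms $I(X';Y'|B')$ and $I(A';B')$ yields $\ing(X',Y',A',B')\geq -O(\log\delta)-\log4$.
\item If alternative~\ref{i:interesting} holds for the whole of $\Gbf'$, then $\L(X',Y')\leq\log\lambda_2(\Gsf)+\log2+O(\log\delta)$. Using \eqref{eq:assumption}, namely $\log\lambda_2(\Gsf)\leq\log\lambda_1(\Gsf)$, we get $-2\log\lambda_2(\Gsf)+\L(X',Y')\leq -\log\lambda_2(\Gsf)+O(\log\delta)$. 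Separately, $\ing\geq -I(X';Y')$ and $I(X';Y')\leq \min(H(X'),H(Y'))$.
\end{itemize}
The second sub-case is the main obstacle: $-I(X';Y')$ must be bounded below by $-2\log\lambda_2(\Gsf)+\L(X',Y')$. First try the identity $I(X';Y')=\tfrac12H(X')+\tfrac12H(Y')-\L(X',Y')$ together with a bound of $\tfrac12\log|\Xsf'|+\tfrac12\log|\Ysf'|$ via $\lambda_2$. Indeed, $|\Esf'|\geq\sqrt{|\Xsf'||\Ysf'|}$ for a $\delta$-regular graph (minimum degree at least $1$ on each side), so from~\ref{i:interesting} we get $\tfrac12\log|\Xsf'|+\tfrac12\log|\Ysf'|\leq 2\log\lambda_2+O(1)$, up to the $\log\delta$ corrections. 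Hence $I(X';Y')\leq 2\log\lambda_2(\Gsf)-\L(X',Y')+O(\log\delta)$, which is exactly the desired bound.

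In the first sub-case we still need $-O(\log\delta)\geq -2\log\lambda_2+\L(X',Y')-O(\log\delta)$. This follows from \eqref{eq:assumption} if $\L(X',Y')\leq \L(X,Y)+O(\log \delta)$. Proving that inequality is itself awkward, so an alternative is to apply the same $\sqrt{|\Xsf'||\Ysf'|}$ bound to avoid it. I expect this bookkeeping, i.e.\ choosing which comparison between $\Gbf'$ and $\Gsf$ to use, to be the delicate part.

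\textbf{Case 2: alternative~\ref{i:q-unif-interesting} holds for both $A'$ and $B'$.} Apply the Shannon-type inequality \ref{p:shannon-ineq}\ref{i:ing-lll}, $\ing\geq -\L(X',Y'|A')-\L(X',Y'|B')+\L(X',Y')$, exactly as in Proposition~\ref{p:ing-unif}. Substituting the two bounds gives
\[
\ing(X',Y',A',B')\geq-2\log\lambda_2(\Gsf)+\L(X',Y')-2\log2-8\log\delta,
\]
and all error terms collected across the cases are $O(\log\delta)$ plus constants depending only on $C_0$.
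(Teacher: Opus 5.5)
Your Case~2 is correct and is the paper's argument. Case~1, however, has a genuine gap. The second application of Corollary~\ref{p:mixing-log}, to $\supp\Gbf'$, does not give what you use in either sub-case.

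Alternative~\ref{i:boring} says the subgraph is at most twice as dense as $\Gsf$. Rearranged, it is the \emph{lower} bound $\log|\Xsf'|+\log|\Ysf'|-\log|\Esf'|\ge\log\frac{|\Xsf||\Ysf|}{|\Esf|}-\log2$, not an upper bound on $I(X';Y')$. For example, a perfect matching inside the incidence graph $\Gsf$ of $\mathrm{PG}(2,q)$ satisfies~\ref{i:boring} with $\delta=1$ and $I(X';Y')\approx2\log q$, whereas $\log\frac{|\Xsf||\Ysf|}{|\Esf|}\approx\log q$.

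In the other sub-case, inserting $|\Esf'|\ge\sqrt{|\Xsf'||\Ysf'|}$ into alternative~\ref{i:interesting} yields only $0\le\log\lambda_2+\log2$. The claimed bound $\frac12\log|\Xsf'|+\frac12\log|\Ysf'|\le2\log\lambda_2+O(1)$ is false. Take $q=p^k$ with $p\ge3$, and an $\mathbb{F}_p$-subspace $L\subset\mathbb{F}_q$ of size $q/p$. The affine points $\mathbb{F}_p\times L$ and the lines $y=mx+c$ with $m,c\in L$ span a biregular subgraph ($\delta=1$) that violates~\ref{i:boring} and satisfies~\ref{i:interesting}. Yet $\frac12\log|\Xsf'|+\frac12\log|\Ysf'|=\frac32\log q-\log p$. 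Moreover, $I(X';Y')=\log q-\log p$ exceeds $2\log\lambda_2(\Gsf)-\L(X',Y')=\frac12\log q$ by an unbounded amount.

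Case~1 does occur for this pair. Take $A'=x+\phi(m)$ with $\phi\colon L\to\mathbb{F}_p$ linear and onto, and $B'$ constant; this gives a uniform quadruple with $I(X';Y'|A')=\log q$. So once you replace $\ing$ by $-I(X';Y')$ and discard $I(X';Y'|A')$, no bookkeeping can rescue the argument. Note also that \eqref{eq:assumption} is not $\lambda_2\le\lambda_1$. It is $\log\lambda_1(\Gsf)\le2\log\lambda_2(\Gsf)+C_0$, and that direction is exactly what Case~1 needs.

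The missing idea is that you never need to compare $I(X';Y')$ with the ambient quantity $\log\frac{|\Xsf||\Ysf|}{|\Esf|}$. Since $\lambda_1(\Gsf)=|\Esf|/\sqrt{|\Xsf||\Ysf|}$, \eqref{eq:assumption} together with $H(X')\le\log|\Xsf|$ and $H(Y')\le\log|\Ysf|$ gives
\[
\log\frac{|\Xsf||\Ysf|}{|\Esf|}=\tfrac12\log|\Xsf|+\tfrac12\log|\Ysf|-\log\lambda_1(\Gsf)\ge\tfrac12H(X')+\tfrac12H(Y')-2\log\lambda_2(\Gsf)-C_0 .
\]
Hence $\ing(X',Y',A',B')\ge I(X';Y'|A')-I(X';Y')\ge-2\log\lambda_2(\Gsf)+\tfrac12H(X')+\tfrac12H(Y')-I(X';Y')-O(\log\delta)$. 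The right-hand side equals $-2\log\lambda_2(\Gsf)+\L(X',Y')-O(\log\delta)$: the subtracted $I(X';Y')$ is absorbed directly into $\L(X',Y')$. This is the paper's Case~1, with no sub-case split.

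Incidentally, the inequality you found awkward, $\L(X',Y')\le\log\lambda_1(\Gsf)$, is immediate, because the supports of $X'|y'$ and $Y'|x'$ lie inside neighbourhoods in $\Gsf$. With the direct route it is not needed.
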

Note the summand $\L(X',Y')$ in the proposition above, which,
in general, is different from $\log\lambda_{1}(\Gsf)$.
\begin{proof}
  Similarly to the previous proof, we set
  \[
    \Xsf':=\supp X'\subseteq\Xsf,\quad
    \Ysf':=\supp Y'\subseteq\Ysf,\quad
    \Esf':=\supp X'Y'\subseteq\Esf.
  \]
  Then we have
  \begin{equation}\label{eq:32b-estimate}
    H(X')\leq\log|\Xsf|,\quad
    H(Y')\leq\log|\Ysf|.
  \end{equation}

  We proceed along the lines of the the proof of
  Proposition~\ref{p:ing-unif} by first applying Lemma~\ref{p:q-unif}
  to each triple $(X',Y',A')$ and $(X',Y',B')$ and distinguishing two cases
  \paragraph{Case 1: At least one of the following two inequalities hold.}
  \begin{equation*}
    \left[
      \begin{aligned}
        I(X'; Y'|A')
        &\geq
          \log\frac{|\Xsf| \cdot |\Ysf|}{|\Esf|} -
          (\log2+5\log\delta)
          \quad\text{or}\\
        I(X'; Y'|B')
        &\geq
          \log\frac{|\Xsf| \cdot |\Ysf|}{|\Esf|}
          -(\log2+ 5\log\delta).
      \end{aligned}
    \right.
  \end{equation*}
  Without loss of generality assume that the first of these two
  inequalities holds.
  We substitute it in the expression for $\ing(X',Y',A',B')$ and obtain
  \[
    \begin{aligned}
      \ing(X',Y',A',B')
      & = I(X'; Y'|A') + I(X'; Y'|B') + I(A'; B') - I(X'; Y') \\
      &\geq I(X'; Y'|A')  - I(X'; Y') \\
      &\geq
        \log\frac{|\Xsf| \cdot |\Ysf|}{|\Esf|} - I(X'; Y') - O(\log\delta)\\ 
      &\geq
        \log\frac{\sqrt{|\Xsf|\cdot|\Ysf|}}{|\Esf|} +
        \log \sqrt{|\Xsf| \cdot |\Ysf|}  - I(X'; Y') - O(\log\delta)
      \\
      &=-\log\lambda_{1}(\Gsf) +
        \frac12 \log|\Xsf| +\frac
        12\log |\Ysf| - I(X'; Y') - O(\log\delta)
      \\
      \bm(\parbox[c]{20mm}{\raggedright By
      Eq. \eqref{eq:assumption}}\bm)\mkern28mu
      &\geq -2\log\lambda_{2}(\Gsf) + \frac12 \log|\Xsf| +\frac
        12\log |\Ysf|
        - I(X'; Y') - O(\log\delta)\\
      \bm(\parbox[c]{20mm}{\raggedright By
      Eq. \eqref{eq:32b-estimate}}\bm)\mkern28mu      
      &\geq
        -2\log\lambda_{2}(\Gsf)  + \frac12 H(X')+\frac12 H(Y') -
        I(X'; Y') - O(\log\delta)  \\    
      &=
        -2\log\lambda_{2}(\Gsf)  +  \L(X',Y') - O(\log\delta).    
    \end{aligned}
  \]
  
  \paragraph{Case 2: Both of the following two inequalities hold.}
  \begin{equation*}
    \left\{
      \begin{aligned}
        \L(X',Y'|A')
        &\leq
          \log\lambda_{2}(\Gsf) + \log2 + 4\log\delta
          \quad\text{and}\\
        \L(X',Y'|B')
        &\leq
          \log\lambda_{2}(\Gsf) + \log2 +4\log\delta.
      \end{aligned}
    \right.
  \end{equation*}
  Then we proceed with the bounds using these inequalities and
  inequality~\ref{p:shannon-ineq}\ref{i:ing-lll}:
  \begin{align*}
    \ing(X',Y',A',B')
    &\geq
      -\L(X',Y'|A') - \L(X',Y'|B') + \L(X',Y')\\
    &\geq -2\log\lambda_{2}(\Gsf) +
      \L(X',Y') - O(\log\delta).
  \end{align*}
  \unskipp[-7ex]
\end{proof}

One natural situation in which a pair of random variables becomes
subsupported on a larger graph arises when a uniform pair is
conditioned on the value of a third random variable jointly
distributed with the pair.  This case is addressed in the following
corollary.
\begin{corollary}\label{p:ing-q-unif-conditionalv2}
  Suppose that the pair $(X,Y)$ is uniform
   and satisfies condition~\eqref{eq:I(X; Y)>C} on page~\pageref{eq:I(X; Y)>C},
  and the extension
  $(X,Y,A,B,V)$ is such that for some fixed atom $v_{0}\in\Vsf$
  the quadruple $(X,Y,A,B|v_{0})$ is $\delta$-uniform.
  Let $\Gsf:=\supp(X,Y)$.
  Then
  \[
    \ing(X,Y,A,B|v_{0})
    \geq
    -2\log\lambda_{2}(\Gsf)+ \L(X,Y|v_{0}) -O(\log\delta)      
  \]
  \unskipp
\end{corollary}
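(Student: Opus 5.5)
This is a direct specialization of Proposition~\ref{p:ing-q-unif}. We apply it to the conditioned variables: set $X':=X|v_{0}$, $Y':=Y|v_{0}$, $A':=A|v_{0}$, $B':=B|v_{0}$, i.e., the quadruple distributed according to the conditional law $p(\,\cdot\,|v_{0})$. By the syntactic convention of Section~\ref{s:notations},
\[
\ing(X,Y,A,B|v_{0})=\ing(X',Y',A',B'), \qquad \L(X,Y|v_{0})=\L(X',Y').
\]
So the claimed inequality is exactly the conclusion of Proposition~\ref{p:ing-q-unif} for this quadruple and the graph $\Gsf=\supp(X,Y)$, provided its hypotheses hold.

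The hypotheses are checked as follows.
\begin{enumerate}[label=(\alph*)]
\item \emph{$\delta$-uniformity.} By assumption the quadruple $(X',Y',A',B')$ is $\delta$-uniform. Since $\delta$-uniformity of a tuple includes $\delta$-uniformity and $\delta$-regularity for every subfamily of indices, the pair $\Gbf'=(X',Y')$ is $\delta$-uniform, and the quadruple is a $\delta$-uniform extension of it.
\item \emph{Subsupport.} If $p(x,y|v_{0})>0$, then $p(x,y)\geq p(x,y,v_{0})>0$, so $(x,y)\in\Esf$. Likewise the marginal supports of $X'$ and $Y'$ lie in $\Xsf$ and $\Ysf$. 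Hence $\supp\Gbf'\subset\Gsf$.
\item \emph{Condition~\eqref{eq:assumption} for $\Gsf$.} The pair $(X,Y)$ is uniform on its support graph $\Gsf$ and satisfies~\eqref{eq:I(X; Y)>C}. Corollary~\ref{p:alonlike-rv} with $\epsilon=\epsilon_{0}$ then gives $2\log\lambda_{2}(\Gsf)\geq\L(X,Y)-C_{0}$, which is precisely~\eqref{eq:assumption}.
\end{enumerate}

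Invoking Proposition~\ref{p:ing-q-unif} yields
\[
\ing(X',Y',A',B')\geq-2\log\lambda_{2}(\Gsf)+\L(X',Y')-O(\log\delta),
\]
and translating back gives the corollary. The $O(\cdot)$ constant depends only on $\epsilon_{0}$, as in the proposition.

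The only step requiring any care is (b). One must observe that Proposition~\ref{p:ing-q-unif} was stated for subsupported pairs exactly so that conditioning on $v_{0}$, which may shrink the vertex sets and destroy biregularity of the conditional support, is harmless. The spectral quantity $\lambda_{2}$ and condition~\eqref{eq:assumption} refer to the fixed ambient graph $\Gsf$, not to the support of $(X',Y')$.
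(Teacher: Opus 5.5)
Your proposal is correct and is essentially the paper's proof. Both apply Proposition~\ref{p:ing-q-unif} to the conditional quadruple $(X,Y,A,B|v_{0})$, using that $(X,Y|v_{0})$ is $\delta$-uniform and subsupported on $\Gsf$, and that $\Gsf$ satisfies~\eqref{eq:assumption} because $I(X;Y)\geq\epsilon_{0}$; your hypothesis checks (a)--(c) just spell out what the paper states in one line each.
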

\begin{proof}
  For the atom $v_{0}$ from the support of $V$, the pair $(X,Y|v_{0})$
  is $\delta$-uniform and subsupported on $\Gsf$.  Since the pair
  $(X,Y)$ satisfies assumption~\eqref{eq:I(X; Y)>C}, its support
  satisfies condition~\eqref{eq:assumption} on
  page~\pageref{eq:assumption}.  Therefore, we may apply
  Proposition~\ref{p:ing-q-unif} to the distribution $(X,Y,A,B|v_{0})$, 
  which gives the conclusion of the corollary.
\end{proof}

\section{How to decompose a tuple into almost-uniform tuples}\label{s:split}
In this section we will show how to decompose (a tuple of) random
variables as a mixture of almost-uniform ones and a small ``remainder''.
\subsection{Splitting a single random variable}\label{s:split1}
\begin{proposition}\label{p:split1}
  For every random variable $X$ there exists an extension $(X,U)$ such
  that the alphabet of $U$ is
  $\Usf=\set{u_{0},u_{1},\dots,u_{k-1},u_{\infty}}$ and the following
  properties hold:
  \begin{enumerate}[label=(\roman*)]
  \item \label{i:partition}
    $H(U|X)=0$; 
  \item \label{i:smalltail}
    $\displaystyle\Pbb[U=u_{\infty}]\cdot H(X)\leq\frac{1}{\log2}$; 
  \item \label{i:entropybound}
    $k\leq H(X)^{2}+1$; 
  \item \label{i:uniform}
    $X|u_i$ is 2-uniform for every $i\in[k]$; 
  \item \label{i:smallparts}
    $\sharp(X|u_{i})\leq 2^{H(X)^{2}+1}$ for every $i\in[k]$.
  \end{enumerate}
  \unskipp
\end{proposition}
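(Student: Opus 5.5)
The construction buckets the atoms of $X$ by probability on a dyadic scale. Let $p(x):=\Pbb[X=x]$ and $h:=H(X)$; we may assume $h>0$, since otherwise $X$ is constant and $k=1$ with $U\equiv u_0$ works. Put $N:=\lceil h^{2}\rceil$, which is at most $h^{2}+1$. For $i\in[N]$ let bucket $i$ consist of the atoms $x$ with $2^{-i-1}<p(x)\le 2^{-i}$. Atoms with $p(x)\le 2^{-N}$ form the tail and are sent to $u_\infty$. Define $U$ as this deterministic function of $X$. Empty buckets are discarded and the remaining ones relabelled $u_0,\dots,u_{k-1}$. Then \ref{i:partition} holds by construction and $k\le N\le h^{2}+1$ gives \ref{i:entropybound}.

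Properties \ref{i:uniform} and \ref{i:smallparts} are immediate. Inside bucket $i$ the conditional probabilities are $p(x)/\Pbb[U=u_i]$, so their max/min ratio is below $2$. Every atom in bucket $i$ has $p(x)>2^{-i-1}\ge 2^{-N}$, and these probabilities sum to at most $1$, so $\sharp(X|u_i)<2^{N}\le 2^{h^{2}+1}$.

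The main step is the tail bound \ref{i:smalltail}. Let $q:=\Pbb[U=u_\infty]$. Each tail atom has $\log(1/p(x))\ge N\log 2\ge h^{2}\log 2$. Since every term in $H(X)=\sum_x p(x)\log(1/p(x))$ is nonnegative,
\[
  h\ \ge\ \sum_{x\text{ in tail}} p(x)\log\frac{1}{p(x)}\ \ge\ q\,h^{2}\log 2 .
\]
Hence $q\,h\le 1/\log 2$, which is exactly \ref{i:smalltail}. This Markov-type estimate is where the choice of $N\approx h^{2}$ matters: the cutoff $2^{-N}$ has to be small enough that the tail carries probability at most about $1/h$, while $N$ stays within the $h^{2}+1$ allowed by \ref{i:entropybound}. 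What remains is bookkeeping: the edge cases $h<1$ and $N=0$ are handled by the convention that if no buckets are needed, then $k=0$ and all of $X$ goes to $u_\infty$, and in every case the tail estimate above still applies.
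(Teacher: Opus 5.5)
Your proof is correct and is essentially the paper's argument: the same dyadic bucketing $2^{-(i+1)}<p(x)\le 2^{-i}$ with cutoff $k=\lceil H(X)^2\rceil$, the same Markov-type tail estimate $H(X)\ge \Pbb[U=u_\infty]\,k\log 2$, and the same counting bound on bucket sizes. The only cosmetic difference is that you treat $H(X)=0$ separately, whereas the paper's construction simply yields $k=0$ there; since $h>0$ forces $N\ge 1$, your closing remark about $N=0$ is redundant.
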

\begin{proof}
  Let $p$ be the distribution mass function of $X$.
  We construct $U$ by partitioning the alphabet
  $\Xsf$ of $X$, where the alphabet of $U$ is  the set of
  parts of the partitioning;  $U$ will be deterministic function of $X$
  and property~\ref{i:partition} is satisfied. We will
  partition $\Xsf$ as
  \begin{equation*}
    \Xsf=u_{0}\sqcup u_{1}\sqcup\dots\sqcup u_{k-1}\sqcup u_{\infty}
  \end{equation*}
  where
  \begin{equation*}
    \begin{aligned}
      u_{i}
      &:=
        \set{x\in\Xsf\st 2^{-(i+1)}<p(x)\leq 2^{-i}},
      &&
         i=0,\dots,k-1\\
      u_{\infty}
      &:=
        \set{x\in\Xsf\st p(x)\leq 2^{-k}}
    \end{aligned}
  \end{equation*}
  and value of $k$ is chosen later in the proof.
  If some of the parts defined above are empty, we silently drop  them.
  We set
  \[
    q_{i}:=\Pbb[U=u_{i}]=\sum_{x\in u_{i}} p(x)>0,
    \quad
    i=0,\ldots,k-1,\infty.
  \]
  Note that for $x,y\in u_{i}$, $i\in[k]$, the probabilities $p(x)$
  and $p(y)$ differ at most 2-fold, so that $X|u_{i}$ is 2-uniform
  and assertion~\ref{i:uniform} is satisfied.

  It remains to choose value $k$ so that
  properties~\ref{i:smalltail},~\ref{i:entropybound}
  and~\ref{i:smallparts} of the proposition hold.  To find an
  appropriate value of $k$, we write
  \begin{equation*}
    H(X)
    \geq
    \sum_{x\in u_{\infty}}p(x)\log p(x)^{-1}
    \geq
    q_{\infty}\cdot\log2^{k}
    =k\cdot q_{\infty}\cdot\log2.
  \end{equation*}
  We take $k:=\left\lceil H(X)^{2}\right\rceil$.  Observe that
  $k \le H(X)^2+1$, so property~\ref{i:entropybound} follows
  immediately.  Further, for the chosen $k$
   \[
    q_{\infty}\cdot H(X)\leq\frac{1}{\log2}
  \]
  and property~\ref{i:smalltail} is satisfied.
  We also observe that
  $|u_{i}|\leq q_{i}\cdot2^{i+1}$, thus
  \begin{equation*}
    \sharp(X|u_{i})\leq q_{i}\cdot 2^{i+1}\leq 2^{k}\leq2^{H(X)^{2}+1}, 
  \end{equation*}
  and property~\ref{i:smallparts} and proposition as a whole are
  proven.
\end{proof}

\subsection{Splitting a tuple into almost-uniform
  tuples}\label{s:split4}
The next claim generalizes Proposition~\ref{p:split1} of the
previous section. It shows that a tuple of random variables can be
decomposed into a mixture of almost-uniform tuples --- at most
$O\big(\log(\text{entropy of the tuple})\big)$ of them --- and a small
residual measure.
    
\begin{theorem}[Almost Uniform Decomposition Lemma]\label{p:split4}
  For every $n\in\Nbb$ there are constants $C_{n}>0$ and
  $\delta_{n}\geq1$ such that for any tuple of random variables
  $\Xbf=(X_{i}\st i\in[n])$ there exists an extension of $\Xbf$ by a
  random variable $V$ with alphabet
  $\Vsf=\set{v_{0},v_{1},\dots,v_{k-1},v_{\infty}}$, such that the
  following properties hold:
  \begin{enumerate}[label=(\roman*)]
  \item\label{i:smalltail4}
    $\displaystyle\Pbb[V=v_{\infty}]\cdot H(X_{[n]})\leq\frac{1}{\log2}$
  \item\label{i:entropybound4}
    $H(V)\leq C_{n}\log(H(X_{[n]})+1\big)$
  \item\label{i:regular4}
    $\Xbf|v_{i}$ is $\delta_{n}$-uniform for every $i\in[k]$.
  \end{enumerate}
  \unskipp[-7ex]
\end{theorem}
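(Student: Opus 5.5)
First apply Proposition~\ref{p:split1} to the joint variable $X_{[n]}$. This yields a function $U$ of $X_{[n]}$ with at most $H(X_{[n]})^{2}+2$ values. On each non-tail atom $u_i$ the distribution of $X_{[n]}|u_i$ is $2$-uniform. The tail satisfies $\Pbb[U=u_\infty]\cdot H(X_{[n]})\le 1/\log 2$, and $\sharp(X_{[n]}|u_i)\le 2^{H(X_{[n]})^2+1}$. Conditionally on $u_i$, the tuple $\Xbf|u_i$ is then essentially a nearly uniform measure on a finite set $S_i\subset \Xsf_0\times\dots\times\Xsf_{n-1}$. By Lemma~\ref{p:unif-reg}\ref{i:reg->unif}, it suffices to make this tuple $\delta$-regular, i.e.\ to split $S_i$ into few pieces on each of which all fibre sizes $\sharp(X_I|\xbf)$, for disjoint $I,J$ and atoms $\xbf\in\Xsf_J$, agree up to a constant factor. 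Restricting a $2$-uniform measure to a subset keeps it $2$-uniform, so full-variable uniformity is never lost.

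For the regularization step, invoke the partitioning theorem of Alon et al.~\cite{alon2007partitioning}. In the form I expect to use, it says that any finite $S\subset \Xsf_0\times\dots\times\Xsf_{n-1}$ can be partitioned into at most $(\log|S|+1)^{c_n}$ pieces. One piece $T_\infty$ is a remainder of size at most $|S|/\log|S|$, say. Every other piece $T$ is $\delta_n$-regular in the above sense for a constant $\delta_n$. Here $\log|S_i|\le H(X_{[n]})^2+1$, so the number of pieces is polynomial in $H(X_{[n]})+1$. Define $V$ on the event $U=u_i$ ($i$ finite) to be the index of the piece of $S_i$ containing $X_{[n]}$. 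Send all remainder pieces, together with $u_\infty$, to the single symbol $v_\infty$.

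Then verify the three properties. Property \ref{i:regular4} holds because each $\Xbf|v$ is the restriction of a $2$-uniform measure to a $\delta_n$-regular set, which is $2\delta_n$-uniform by Lemma~\ref{p:unif-reg}. Property \ref{i:entropybound4} holds because $H(V)\le\log|\Vsf|\le\log\big((H^2+2)(H^2+1)^{c_n}+1\big)=O_n(\log(H+1))$. Property \ref{i:smalltail4} requires the remainder's probability to be $O(1/H(X_{[n]}))$. The $u_\infty$ part already satisfies this. For the remainders, the $2$-uniformity of each $X_{[n]}|u_i$ converts the cardinality bound $|T_\infty|\le\eta|S_i|$ into probability at most $2\eta$. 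So it suffices to run the partition with remainder fraction $\eta\approx 1/(\log 2\cdot H)$. This costs only polylogarithmically many extra pieces if the Alon et al.\ bound is polynomial in $1/\eta$ and $\log|S|$.

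The main obstacle is exactly this last point. I need the precise quantitative shape of the partitioning result: number of pieces $\mathrm{poly}_n(\log|S|,1/\eta)$ with constant regularity $\delta_n$, and regularity for all pairs of disjoint index sets $I,J$ rather than just coordinate projections. If the cited theorem only gives a constant-fraction remainder, I would iterate it. I would re-apply the theorem to the remainder about $\log H$ times, each round shrinking the leftover mass by a constant factor, so that the piece count stays polylogarithmic. A further complication is that the total constant $1/\log 2$ in \ref{i:smalltail4} is then too tight. It would have to be obtained by running Proposition~\ref{p:split1} with a slightly larger $k$ (e.g.\ $k=\lceil 2H^2\rceil$), which leaves half of the budget for the partition remainders.
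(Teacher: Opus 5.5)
Your proposal is essentially the paper's proof: apply Proposition~\ref{p:split1} to $X_{[n]}$, refine each non-tail part $u_i$ with the Alon et al.\ partition, and use Lemma~\ref{p:unif-reg}\ref{i:reg->unif} to turn $\beta_n$-regularity plus $2$-uniformity into $2\beta_n$-uniformity. The obstacle you flag does not arise, because Theorem~\ref{p:cool} partitions \emph{all} of $S$ into at most $(\log|S|)^{\alpha_n}$ $\beta_n$-regular parts (regular for all disjoint $I,J$) with no remainder. So one simply sets $v_\infty:=u_\infty$, property~\ref{i:smalltail4} is inherited directly from Proposition~\ref{p:split1}, and neither the iteration nor the change of $k$ is needed.
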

Before proving the theorem above we introduce some necessary tools.
We will use a theorem by N.~Alon, I.~Newman, A.~Shen, G.~Tardos,
N.~Vereshchagin, \cite[Theorem 3]{alon2007partitioning}.  To formulate
this result, Theorem~\ref{p:cool} below, additional notation is
needed. Suppose we have a collection of finite sets
$(\Xsf_{i}\st i=0,\dots,n-1)$ and a subset $S\subset\prod
\Xsf_{i}$. For a subsets of indices $I\in 2^{[n]}$ we denote by
$S_{I}$ the image of the projection of $S$ to $\Xsf_{I}$. For
$\sbf\in S_{I}$ we denote by $S|\sbf$ the fiber of this projection
over point $\sbf$. More generally, for a pair of disjoint subsets of
indices $I,J\in2^{[n]}$ and a point $\sbf\in S_{J}$, we set
$S_{I}|\sbf:=S_{I\cup J}|\sbf$. We say that a multidimensional set $S$
is \emph{$\beta$-regular}%
\footnote{ In the original article such sets are called \emph{strongly
    $\beta$-uniform}. This terminology conflicts with ours, so we use
  a different term.  The term ``$\beta$-regular'' seems natural, since
  $\beta$-regular 2D sets correspond to ``almost biregular'' bipartite
  graphs in the sense that, within each part, the ratio between the
  maximum and minimum degrees is universally bounded.} %
for some $\beta\geq1$ if for all disjoint $I,J\in2^{[n]}$ holds
\[
  \frac{\max\set{\sharp(S_{I}|\sbf)\st \sbf\in S_{J}}}
  {\min\set{\sharp(S_{I}|\sbf)\st \sbf\in S_{J}}}
  \leq \beta, 
\]
where $\sharp(\cdot)$ stands for cardinality of the set.

Now we are ready to state the theorem.
\begin{theorem}[{\cite[Theorem 3]{alon2007partitioning}}]\label{p:cool}
  For every $n\in\Nbb$ there exist constants $\alpha_{n}>0$ and
  $\beta_{n}\geq1$ such that for every $n$-dimensional finite set $S$
  there exists a partition of $S$ into at most
  $(\log|S|)^{\alpha_{n}}$ $\beta_{n}$-regular parts.
\end{theorem}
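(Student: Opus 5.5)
Since Theorem~\ref{p:cool} is quoted from~\cite{alon2007partitioning}, the paper only needs to invoke it. If I had to prove it, I would build the partition by iterated refinement according to dyadic ``degree profiles''.

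\textbf{Refinement step.} Given a part $P\subseteq S$, attach to each point $\sbf\in P$ its profile
\[
\bigl(\lfloor \log_2 \sharp(P_I|\sbf_J)\rfloor \st I,J\in 2^{[n]}\ \text{disjoint}\bigr),
\]
where $\sbf_J$ is the projection of $\sbf$ to $\Xsf_J$. Then split $P$ into the classes of equal profile. Each coordinate of the profile takes at most $\log_2|S|+1$ values, and there are at most $3^n$ pairs $(I,J)$. Hence one step splits $P$ into at most $(\log_2|S|+1)^{3^n}$ pieces, which is already polylogarithmic in $|S|$ with exponent depending only on $n$. The difficulty is that the fibers of a piece $P'$ are computed inside $P'$, not inside $P$. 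Passing to a subset shrinks fibers non-uniformly, so $P'$ need not be $2$-regular, and the refinement must be iterated.

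\textbf{Termination via a potential.} I would control the iteration through the vector of projection sizes $\bigl(\log_2|P_I|\st I\in2^{[n]}\bigr)$, each coordinate lying in $[0,\log_2|S|]$. The key lemma I would aim for is a dichotomy for a piece $P'$ produced from $P$:
\begin{itemize}
\item either every projection satisfies $|P'_I|\ge |P_I|/c_n$, and then $P'$ is already $\beta_n$-regular;
\item or some coordinate $\log_2|P'_I|$ drops by at least $\log_2 c_n$.
\end{itemize}
The intuition for the first case is that if $P'$ keeps a constant fraction of every projection, then for typical $\sbf$ the fibers $P'_I|\sbf$ retain a constant fraction of $P_I|\sbf$. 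Since all points of $P'$ share the same dyadic fiber sizes in $P$, this makes the fiber sizes in $P'$ comparable up to a constant. Non-typical points are the issue; to handle them I would also include in the profile the dyadic sizes of the fibers of the projections $P_K$, and argue by a double counting of $|P'_{I\cup J}|=\sum_{\sbf}\sharp(P'_I|\sbf)$ over $\sbf\in P'_J$.

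\textbf{Counting.} Along any branch of the refinement tree, the potential can drop at most $2^n\log_2|S|/\log_2 c_n$ times. That bound is far too many rounds, since it would give $(\log|S|)^{\log|S|}$ parts. So the counting must instead be organised by induction on $n$ or by ordering the coordinates. Concretely, I would refine first with respect to the largest projections, treating lower-dimensional projections as already handled by the inductive hypothesis for $n-1$. The aim is a bounded number $r_n$ of rounds, giving at most $(\log|S|)^{r_n3^n}$ parts.

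\textbf{Main obstacle.} I expect the hard step to be this bound on the number of rounds by a function of $n$ alone, equivalently making the ``no large drop $\Rightarrow$ regular'' lemma strong enough that a constant number of refinements suffices. My first attempt would be the induction on dimension combined with allowing $\beta$ to worsen by a constant factor at each level.
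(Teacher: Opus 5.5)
The paper does not prove this statement: it is quoted from Alon, Newman, Shen, Tardos and Vereshchagin and simply invoked, so your first sentence matches the paper's treatment exactly. Read as a proof, however, your sketch has a genuine gap, and its key dichotomy is false already for $n=2$.

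Take left vertices $A\sqcup B$ with $|A|=|B|=m/2$ and right vertices $R\sqcup R'$ with $|R'|=|R|/8$. Join each $a\in A$ to $D$ vertices of $R$. Join each $b\in B$ to one vertex of $R$ and to $D-1$ vertices of $R'$. Spread the edges evenly, so that all vertices of $R$ share one degree and all vertices of $R'$ share another, roughly eight times larger and hence in a different dyadic class.

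\begin{itemize}
\item Every left degree in $P$ equals $D$. So there are exactly two profile classes: the edges into $R$ and the edges into $R'$.
\item The class $P'$ of edges into $R$ has $|P'_0|=|P_0|$, $|P'_1|=\tfrac{8}{9}|P_1|$ and $|P'|>|P|/2$.
\item Yet inside $P'$ the left degrees are $D$ on $A$ and $1$ on $B$. So $P'$ is not $\beta$-regular for any $\beta<D$.
\item Letting $B$ be an $\epsilon$-fraction of the left side, and shrinking $R'$ accordingly, makes every ratio $|P'_I|/|P_I|$ exceed $1-\epsilon$ with the same degree gap. So no constant $c_n>1$ rescues the lemma.
\end{itemize}

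The vertices of $B$ form half of $P'_0$ but carry only a $1/(D+1)$ fraction of $P'$. A potential built from projection sizes cannot see this. For $n=2$ your profile already records every fibre size, so enlarging it as you propose does not help.

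Second, the counting never closes. As you observe, the potential only bounds the number of parts by $(\log|S|)^{O(\log|S|)}$. The claim that $O_n(1)$ rounds suffice is stated as an aim, not argued, and it is the whole content of the theorem. A straightforward induction on $n$ does give regularity along a single chain $\emptyset\subset\{0\}\subset\{0,1\}\subset\cdots\subset[n]$: split by the fibre sizes of $S\to S_{[n-1]}$ and pull back a partition of the base. But refining for one chain destroys regularity along the others, and that interaction is exactly what is left open.

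One way to close the gap is to stop bounding the depth of a global refinement and prove a density statement instead: every $S$ contains a $\beta_n$-regular subset of size at least $|S|/(\log|S|)^{O(2^n)}$. Peeling such subsets off greedily then gives $(\log|S|)^{O(2^n)}$ parts, since regularity of a part is intrinsic. To prove the density statement:
\begin{enumerate}
\item Suppose that for every $J$ the full fibres $\sharp(P|t)$, $t\in P_J$, agree up to a factor $R$. Then $P$ is $R^2$-regular, because $\sharp(P|t)$ is the sum of $\sharp(P|u)$ over the fibre of $P_{I\cup J}\to P_J$ above $t$.
\item Pigeonhole the points by the dyadic classes $2^{c_J}$ of these $2^n$ numbers. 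Keep the largest class $U$, which is at least a $(\log_2R+2)^{-2^n}$ fraction of $P$.
\item Repeatedly delete the whole current fibre over any $t\in U_J$ whose size has fallen below $2^{c_J}/M$. Each $t$ is deleted at most once and $|U_J|\,2^{c_J}\le|P|$, so at most $2^n|P|/M$ points are lost.
\item With $M=2^{n+1}(\log_2R+2)^{2^n}$, at least half of $U$ survives, and its fibre ratios are below $2M$.
\end{enumerate}
Hence one round maps $R\mapsto O_n\bigl((\log R)^{2^n}\bigr)$, starting from $R=|S|$. A constant is reached after $O_n(\log^*|S|)$ rounds, and the total loss is $(\log|S|)^{2^n+o(1)}$.
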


\begin{proof}[Proof of Theorem~\ref{p:split4}]
  We start by applying Proposition~\ref{p:split1} to the random
  variable $X_{[n]}$ and construct a partition 
  \[
  \Usf=\set{u_{0},u_{1},\dots,u_{k'-1},u_{\infty}}
  \]
  of the alphabet $\Xsf_{[n]}$ of $X_{[n]}$.  By
  Proposition~\ref{p:split1}, this partition satisfies properties
  \ref{i:partition}--\ref{i:smallparts}.  In particular, by
  Proposition~\ref{p:split1}\ref{i:entropybound}, the cardinality $k'$
  of this partition is bounded by $H(X)^2+1$.  (Here $k'$ is used to
  distinguish this quantity from the parameter $k$ appearing in the
  statement of the theorem.)

  We now refine this partition
  using Theorem~\ref{p:cool}.
  We leave $u_{\infty}$ unchanged and set
  $v_{\infty}:=u_{\infty}$, thus guaranteeing
  property~\ref{i:smalltail4} in the conclusion of the theorem.

  Consider one of the parts $u_{i}$, $i\in[k']$.  By
  Proposition~\ref{p:split1}\ref{i:uniform} and \ref{i:smallparts},
  $X_{[n]}|u_{i}$ is 2-uniform and its support has at most
  $2^{H(X)^2+1}$ points. Apply Theorem~\ref{p:cool} to
  \[
    S:=u_{j}\subset\Xsf_{[n]}=\prod_{i\in[n]}\Xsf_{i}.
  \]
  The theorem provides the partition of $u_{i}$ into at most
  \begin{equation}
  \label{eq:upper-bound-parts}
    \left(\log2^{H(X_{[n]})^{2}+1}\right)^{\alpha_{n}}
    =
    (\log2)^{\alpha_{n}}\big(H(X_{[n]})^{2}+1\big)^{\alpha_{n}}
  \end{equation}
  $\beta_{n}$-regular parts $v_{i1},v_{i2},\dots$ Thus, $\Xbf|v_{ij}$
  is a $\beta_{n}$-regular $n$-tuple for every $j$ in its range. Note
  also that $X_{[n]}|v_{ij}$ is 2-uniform, therefore by
  Lemma~\ref{p:unif-reg}\ref{i:reg->unif}, the $n$-tuple
  $\Xbf|v_{ij}$ is $2\beta_{n}$-uniform. Thus, we can take
  $\delta_{n}=2\beta_{n}$, proving property~\ref{i:regular4}.
  Refining all parts $u_i$, $i\in[k']$ gives at most
  \[
    D:=(\log2)^{\alpha_{n}}\big(H(X_{[n]})^{2}+1\big)^{\alpha_{n}+1}
  \]
  parts (the number $D$ is the product of the upper bound on
    $k'$ and the upper bound on the number of almost-regular parts
    given in \eqref{eq:upper-bound-parts}). Therefore,
  \[
    H(V)
    \leq
    \log (D+1)
    \leq
    C_{n}\cdot\log (H(X_{[n]})+2)
  \]
  for some $C_{n}>0$ (the added constant $+2$ in the logarithmic term
  handles the degenerate case $H(X_{[n]})=0$).  This proves the
  required inequality in~\ref{i:entropybound4}.
\end{proof}

\section{General case}\label{s:main}
\def\thetheorem{\thesection.\Alph{theorem}}
\begin{theorem}\label{p:main}
  Suppose $\Gbf=(X,Y)$ is a pair of random variables uniform on its support
  satisfying condition~\eqref{eq:I(X; Y)>C}. Let $\Gsf=\supp \Gbf$.
  Then for \emph{any} extension $(X,Y,A,B)$
  the following bound holds: 
  \begin{equation*}
    \ing(X,Y,A,B)
    \geq
    -2\log\lambda_{2}(\Gsf) + \log\lambda_{1}(\Gsf) - O\big(\log H(\XYAB)\big). 
  \end{equation*}
  \unskipp[-4ex]
\end{theorem}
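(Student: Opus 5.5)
We apply the Almost Uniform Decomposition Lemma (Theorem~\ref{p:split4}) to the quadruple $(X,Y,A,B)$ with $n=4$. This gives an extension by $V$ with alphabet $\set{v_0,\dots,v_{k-1},v_\infty}$ such that $H(V)\le C_4\log(H(\XYAB)+1)$ and $\Pbb[V=v_\infty]\cdot H(\XYAB)\le 1/\log 2$. Moreover, for each $i\in[k]$ the quadruple $(X,Y,A,B|v_i)$ is $\delta_4$-uniform. The plan is to compare $\ing(X,Y,A,B)$ with the conditional version $\ing(X,Y,A,B|V)=\sum_v p(v)\,\ing(X,Y,A,B|v)$, and then bound each summand separately.

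\textbf{Step 1 (removing the conditioning).} Each of the four terms of $\ing$ changes by at most $H(V)$ when we condition on $V$. More precisely, $|I(X;Y|A)-I(X;Y|AV)|\le H(V)$, and similarly for the other three terms. Hence
\[
  \ing(X,Y,A,B)\ge \ing(X,Y,A,B|V)-4H(V)=\ing(X,Y,A,B|V)-O(\log H(\XYAB)).
\]
This is a Shannon-type estimate of the kind collected in Section~\ref{s:shannon}.

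\textbf{Step 2 (good atoms).} For $i\in[k]$ we invoke Corollary~\ref{p:ing-q-unif-conditionalv2} with $v_0:=v_i$ and $\delta=\delta_4$, which is a constant. It gives
\[
  \ing(X,Y,A,B|v_i)\ge -2\log\lambda_2(\Gsf)+\L(X,Y|v_i)-O(1).
\]

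\textbf{Step 3 (the bad atom).} For $v_\infty$ we use the trivial bound $\ing(X,Y,A,B|v_\infty)\ge -I(X;Y|v_\infty)\ge -H(\XYAB|v_\infty)$. Its weighted contribution is $p(v_\infty)H(\XYAB|v_\infty)$. This quantity is at most $H(\XYAB|V)\le H(\XYAB)$, but that bound is too crude. Instead we use $p(v_\infty)H(\XYAB|v_\infty)\le H(\XYAB)+$ something small. Concretely, $\sum_v p(v)H(\XYAB|v)=H(\XYAB|V)\le H(\XYAB)$, and we will need the sharper fact that the entropy mass on $v_\infty$ is $O(1)$. Proposition~\ref{p:split1} gives this directly: $u_\infty$ consists of atoms with $p(x)\le 2^{-k}$, and one bounds $p(v_\infty)H(\XYAB|v_\infty)\le p(v_\infty)\log|\text{supp}|$ with the tail estimate. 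If this fails, we can instead rerun the split with $k=\lceil H^3\rceil$ so that $q_\infty H^2=O(1)$, which only changes constants in $H(V)$. In addition, the Step~2 lower bound also has to hold, up to a loss, for the missing mass. For that we use $-2\log\lambda_2+\log\lambda_1\le C_0\le O(1)$ from~\eqref{eq:assumption}, together with $p(v_\infty)\cdot|{-2\log\lambda_2}+\L|=O(p(v_\infty)H(XY))=O(1)$.

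\textbf{Step 4 (averaging the $\L$ terms).} Summing over $i$ gives $-2\log\lambda_2(\Gsf)+\sum_i p(v_i)\L(X,Y|v_i)-O(1)$. It remains to show that $\L(X,Y|V)\ge \L(X,Y)-O(H(V))$. This holds because $H(X|YV)\ge H(X|Y)-H(V)$, and likewise for $H(Y|XV)$. Together with $\L(X,Y)=\log\lambda_1(\Gsf)$ from~\eqref{eq:graph-rv}, this gives the theorem.

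The main obstacle I expect is Step~3, that is, controlling the contribution of $v_\infty$ so that it costs only $O(1)$ rather than $O(H)$. The tail bound in Proposition~\ref{p:split1}\ref{i:smalltail} is calibrated to give exactly $p(v_\infty)H(\XYAB)=O(1)$. Combined with $\ing(\cdot|v_\infty)\ge -I(X;Y|v_\infty)\ge -\log|\Xsf|-\log|\Ysf|$, and using that $\log|\Xsf|,\log|\Ysf|\le H(XY)\le H(\XYAB)$ by uniformity of $(X,Y)$, the contribution is at most $2p(v_\infty)H(\XYAB)=O(1)$. So the bound does close.
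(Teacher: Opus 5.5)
Your proposal is correct and follows the paper's proof essentially step for step. You condition on the $V$ from Theorem~\ref{p:split4} at cost $O(H(V))$, apply Corollary~\ref{p:ing-q-unif-conditionalv2} on the good atoms, and absorb the tail atom and the missing mass using $q_\infty H(\XYAB)\le 1/\log 2$, the uniformity of $(X,Y)$ (so that $I(X;Y|v_\infty)$ and $\L(X,Y|v_\infty)$ are at most $H(XY)$), and $2\log\lambda_2(\Gsf)\ge -C_0$.

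The detours in your Step~3 are unnecessary, and they would not work in general. Neither bounding via $H(\XYAB|v_\infty)$ nor rerunning the split with $k=\lceil H^3\rceil$ helps, because $q_\infty H(\XYAB|v_\infty)$ can be of order $H(\XYAB)$. The bound in your final paragraph is exactly the one the paper uses.
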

\begin{remark}
  There might be a temptation to remove the sublinear term
  $O\big(\log H(\XYAB)\big)$ by the standard tensorization argument:
  take $n$ i.i.d. copies of $(X,Y,A,B)$, apply the inequality to
  the ``tensorized'' distribution, normalize by $n$ and send
  $n$ to infinity. However, the inequality depends on the spectral
  properties of the graph corresponding to the uniform pair $(X,Y)$.
  Under tensorization, the second largest eigenvalue will converge
  (on normalized $\log$-scale) to the largest eigenvalue, thus
  rendering the inequality trivial and useless.
\end{remark}

\begin{proof}   
  Let $V$ be the extension of $(X,Y,A,B)$ provided by 
  Theorem~\ref{p:split4}.  Then
  \begin{align}\label{eq:ing}
    \ing(X,Y,A,B)
    &\geq
      \ing(X,Y,A,B|V) - O(H(V)),
  \end{align}
  since each of four summands in the expression for $\ing$ may shift
  under conditioning by at most $H(V)$.%
  \footnote{It is not hard to show that
    $\ing(X,Y,A,B)\geq \ing(X,Y,A,B | V) - 4H(V)$.  An anonymous
    referee observed that the even stronger inequality
    $\ing(X,Y,A,B)\geq \ing(X,Y,A,B | V) - 3H(V)$ is itself a
    Shannon-type inequality. Since the optimal constant in front of
    the term $H(V)$ is immaterial for our purposes, 
    we omit the proof from the present version of the paper.
     } %
  Let $\Vsf=\set{v_{0},\dots,v_{k-1},v_{\infty}}$ be the alphabet of
  $V$ as in Theorem~\ref{p:split4} and $q_{i}:=\Pbb[V=v_{i}]$,
  $i=0,\dots,k-1,\infty$.  Consider the first summand in
  \eqref{eq:ing} and decompose
  \begin{align}\label{eq:ing|V}
    \ing(X,Y,A,B|V)
    &=
      \sum_{i=0}^{k-1}q_{i}\ing(X,Y,A,B|v_{i})
    + q_{\infty} \ing(X,Y,A,B|v_{\infty}). 
  \end{align}
  We first bound the residual term. Since $(X,Y)$ is uniform on its
  support, conditioning on an atom of $V$ does not increase the entropy.
  Using the following crude estimate
  \[
    \ing(X,Y,A,B|v_{\infty})\geq -I(X; Y|v_{\infty})\geq -H(XY)\geq-H(\XYAB)
  \]
  we get
  \begin{equation}\label{eq:tailbound}
    q_{\infty} \ing(X,Y,A,B|v_{\infty})\geq -q_{\infty}H(\XYAB)\geq -\frac{1}{\log2}, 
  \end{equation}
  where the last inequality follows from
  Proposition~\ref{p:split4}\ref{i:smalltail4}.

  Before we estimate the first summand in~\eqref{eq:ing|V},  we need to
  prepare some auxiliary inequalities:\\
  \textbf{1.} Note that $(X,Y)$ is a uniform pair, thus conditioning does not
  increase entropy and we have
  \begin{equation}
    \label{eq:tailbound-l}
    q_{\infty}\cdot\L(X,Y|v_{\infty})
    \leq
    q_{\infty}\cdot\L(X,Y)
    \leq
    q_{\infty}\cdot H(\XYAB)
    \leq
    \frac{1}{\log2}. 
  \end{equation}
  \textbf{2.} Since $\lambda_{1}\ge 1$, and we work under the standing assumption~\eqref{eq:assumption} throughout this paper, 
   we have
    \begin{equation}
    \label{eq:tailbound-lammbda2}
    \begin{aligned}
      2\cdot q_{\infty}\cdot\log\lambda_{2}
      &\geq
        q_{\infty}(\log\lambda_{1}-C_{0})\geq -C_{0}. 
    \end{aligned}
  \end{equation}

  \medskip
  
  Now we turn to the first summand in~\eqref{eq:ing|V}.
  For each $i\neq\infty$,  the quadruple $(X,Y,A,B|v_{i})$ is
  $\delta_{4}$-uniform, so we can apply
  Corollary~\ref{p:ing-q-unif-conditionalv2} to it and obtain 
  \begin{equation}\label{eq:ing|v}
    \ing(X,Y,A,B|v_{i})
    \geq
    -2\log\lambda_{2}(\Gsf)+\L(X,Y|v_{i})-O(1).  
  \end{equation}
  Therefore
  \begin{equation}\label{eq:mainbound}
  \begin{aligned}
    \sum_{i=0}^{k-1}&q_{i}\cdot\ing(X,Y,A,B|v_{i})
    \geq
      \sum_{i=0}^{k-1}q_{i}
      \big(
      -2\log\lambda_{2}(\Gsf)+\L(X,Y|v_{i}) - O(1)
      \big)\\
    &\geq
      -2\log\lambda_{2}(\Gsf)+\L(X,Y|V)\\
    &\quad\,
      - q_{\infty}\L(X,Y|v_{\infty})+2\cdot
      q_{\infty}\cdot\log\lambda_{2}(\Gsf) - O(1)\\
    &\geq
      -2\log\lambda_{2}(\Gsf)+\L(X,Y|V) - O(1)\\
    &\geq
      -2\log\lambda_{2}(\Gsf)+\L(X,Y)- O\big(H(V)\big), 
  \end{aligned}
  \end{equation}
  where we used inequalities~\eqref{eq:tailbound-l}--\eqref{eq:ing|v}
  and $|\L(X,Y)-L(X,Y|V)|\leq H(V)$.
  Assembling
  together~\eqref{eq:ing},~\eqref{eq:ing|V},~\eqref{eq:tailbound} and~\eqref{eq:mainbound}, 
  we obtain
  \begin{align*}
    \ing(X,Y,A,B)
    &\geq
      -2\log\lambda_{2}(\Gsf)+\L(X,Y)- O\big(H(V)\big)\\
    &=
      -2\log\lambda_{2}(\Gsf)+\log\lambda_{1}(\Gsf) - O\big(\log H(\XYAB)\big), 
  \end{align*}
  as claimed.
\end{proof}

\section{Examples}\label{s:examples}

In this section, we consider several examples of pairs $(X,Y)$
  uniform on their support whose supporting graphs have strong
  expander properties.  For each example, our main theorem provides a
  strong lower bound on the $\ing$-expression, implying that the
  Ingleton inequality holds up to a logarithmic error term for
  arbitrary auxiliary random variables $A,B$.

By contrast, assuming Conjecture~\ref{co:nonextractable}, the standard
approach via extractable mutual information fails  on
these examples: for every auxiliary random variable $W$, at least one
of the conditions in~\eqref{eq:common-information} is violated by a
large amount, and the classical method or the MMRV inequality
imply only (nearly) trivial lower bounds for the $\ing$-expression
instead of the Ingleton inequality.

\subsection{Linear projective flags}\label{s:linear}
\paragraph{Construction of $(X,Y)$.}
Let $\Fbb$ be a finite field of large cardinality $q$. Let
$\Xsf:=\Fbb P^{2}$ and $\Ysf:=\overline{\Fbb P}^{2}$ be the projective
plane and dual projective plane over field $\Fbb$, respectively.
Define
\[
  \Esf:=\set{(x,l)\in \Fbb P^{2}\times\overline{\Fbb P}^{2}\st x\in l}
\]
In other words, $\Esf$ consist of incident point-line pairs --- a
point in the projective plane $x\in\Fbb P^{2}$ and a projective line
$l\subset\Fbb P^{2}$ passing through $x$.  Denote by $\Fsf$ the
biregular bipartite graph, $\Fsf=(\Xsf\sqcup\Ysf,\Esf)$ and by
$\Fbf=(X,Y)$ the uniform pair supported on $\Fsf$. Direct calculation
then shows
\begin{equation*}
  \begin{aligned}
    H(X)&=\log|\Xsf|=\log(q^{2}+q+1)=2\log q +O(1),\\
    H(Y)&=\log|\Ysf|=\log(q^{2}+q+1)=2\log q +O(1),\\
    I(X; Y)&=\log(\frac{q^{2}+q+1}{q+1})=\log q + O(1).
  \end{aligned}
\end{equation*}

\paragraph{Spectral bound for the $\ing$-expression.} 
The spectrum of the graph $\Fsf$ is
\[
  \sigma(\Fsf)=\set{\pm\sqrt{q},\pm(q+1)},
\]
see for example~\cite{hoffman1965line}.  This graph is a strong
expander in the sense that its second eigenvalue is small relative to
the largest one and we have
\begin{equation*}
  \log\frac{\lambda_{1}}{\lambda_{2}^{2}}=\log(1+\tfrac1q)>0.
\end{equation*}

Applying Theorem~\ref{p:main} to $\Fbf$ and substituting the value
above we obtain the following corollary.

\begin{corollary}\label{p:linflags}
  Let $(X,Y,A,B)$ be four random variables, where $(X,Y)$ is the
  uniformly random choice of an incident point-line pair in the
  projective plane over the field $\Fbb_{q}$. Then
  \[
    \ing(X,Y,A,B)\geq-O(\log H(\XYAB)).
  \]
  \unskipp
\end{corollary}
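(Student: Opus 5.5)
The corollary should follow by applying Theorem~\ref{p:main} to the pair $\Fbf=(X,Y)$ supported on the incidence graph $\Fsf$ and substituting the spectral data listed above.

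First I would check the hypotheses of Theorem~\ref{p:main}. The pair $\Fbf$ is uniform on its support by construction. The graph $\Fsf$ is biregular with both degrees equal to $q+1$, since every point lies on $q+1$ lines and every line contains $q+1$ points. Condition~\eqref{eq:I(X; Y)>C} holds with, say, $\epsilon_0=\log(7/3)$. Indeed,
\[
  I(X;Y)=\log\frac{q^2+q+1}{q+1}
\]
is increasing in $q$ and equals $\log(7/3)$ at $q=2$. So a single fixed threshold $\epsilon_0$ works for every prime power $q$, and the constants hidden in the $O(\cdot)$ do not depend on $q$.

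Next I would read off the eigenvalues. The spectrum is $\{\pm(q+1),\pm\sqrt q\}$ by~\cite{hoffman1965line}. Hence $\lambda_1=q+1$, which agrees with $\sqrt{d_1d_2}$, and $\lambda_2=\sqrt q$. Since the graph is connected, $q+1$ has multiplicity one, so the second largest eigenvalue is really $\sqrt q$. Substituting into Theorem~\ref{p:main} gives
\[
  \ing(X,Y,A,B)\ge -\log q+\log(q+1)-O(\log H(\XYAB))=\log\bigl(1+\tfrac1q\bigr)-O(\log H(\XYAB)).
\]
The first term is positive, so dropping it yields the claim.

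There is no real obstacle here. The only points needing care are the uniformity of the threshold $\epsilon_0$ in $q$, so that the implicit constant is absolute, and confirming that $\lambda_2$ is $\sqrt q$ rather than a repeated copy of $\lambda_1$.
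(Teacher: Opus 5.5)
Your proof is correct and matches the paper's argument: apply Theorem~\ref{p:main} to the incidence graph and substitute $\lambda_1=q+1$ and $\lambda_2=\sqrt q$, so the spectral term equals $\log(1+\tfrac1q)>0$ and can be dropped. You also check that the single threshold $\epsilon_0=\log(7/3)$ works for every prime power $q$, which makes the implied constant independent of $q$; the paper leaves this detail implicit, and it is worth including.
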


\paragraph{Non-extractable mutual information.} 
On the other hand, for these $(X,Y)$, the standard proof of the Ingleton inequality 
(based on extraction of the mutual information) seems to fail completely. Indeed,
Conjecture~\ref{co:nonextractable} would imply that
for the graph at hand, for any $W$ 
\begin{equation*}
  \begin{aligned}
    I(X; Y|W) + I(W; X|Y) + I(W; Y|X)
    &\geq
      \log q - C\cdot\log\log q\\
    &=
      I(X; Y) - C'\cdot\log\log q.
  \end{aligned}
\end{equation*} 
In this case, regardless of the choice of $W$, the MMRV
inequality~\eqref{eq:mmrv} cannot imply a statement stronger than
\begin{equation*}
\label{eq:ingleton-much-weaker}
  \ing(X,Y,A,B)
  \geq
  -I(X; Y) + C'\cdot\log\log q.
\end{equation*}
This is much weaker than the Ingleton inequality.  In fact, it is
close to the trivial (Shannon-type) inequality
$\ing(X,Y,A,B)\geq -I(X; Y)$.

\subsection{Algebraic projective flags}\label{s:algebraic}
\paragraph{Construction of $(X,Y)$.}
Example in the previous section is symmetric with respect to
transposing $X$ and $Y$, due to projective duality. We can replace
projective lines in the previous example by projective plane curves of
some fixed degree $k$. Not to dive into algebraic geometry over finite
fields and in order to construct uniform (rather then almost-uniform)
pairs $(X,Y)$, we consider a special case of graphs of polynomials.

Let $\Fbb_{q}^{(k)}[t]$ be the space of polynomials of degree at most
$k$. Define the left and the right parts of the biregular bipartite
graph $\Fsf_{k}=(\Xsf\sqcup\Ysf,\Esf)$ by
\[
  \Xsf:=\Fbb_{q}^{2},\qquad
  \Ysf:=\Fbb_{q}^{(k)}[t]
\]
and
\[
  \Esf
  :=
  \set{\bm{(}\vect{x\\y},p\bm{)}\in\Fbb_{q}^{2}\times\Fbb_{q}^{(k)}[t]
    \st y=p(x)}
\]
Let $\Fbf_{k}=(X,Y)$ be the uniform pair
supported on $\Fsf_{k}$. Then
\begin{equation*}
  \begin{aligned}
    H(X)&=\log|\Xsf|=2\log q,\\
    H(Y)&=\log|\Ysf|=(k+1)\log q,\\
    I(X; Y)&=\log q.
  \end{aligned}
\end{equation*}
\paragraph{Spectral bound for the $\ing$-expression.}\
One can show that in this graph 
\begin{equation*}
  \begin{aligned}
    \log\lambda_{1}(\Fbf_{k})&=\frac{k+1}2\log q,\\
    \log\lambda_{2}(\Fbf_{k})&=\frac{k}2\log q,
  \end{aligned}
\end{equation*}
see, for  example,~\cite{caillat2024common-arxiv}.  Thus, in this example we get the
smallest possible second eigenvalue compared with the degrees of the
graph, see Proposition~\ref{p:alonlike}\eqref{eq:alonlike-strong}.
For this graph we have
\begin{equation*}
  \log\frac{\lambda_{1}}{\lambda_{2}^{2}}=-\frac{k-1}{2}\log q.
\end{equation*}

Applying Theorem~\ref{p:main} to the setup at hand gives the
following corollary.
\begin{corollary}\label{p:algflags}
  Let $(X,Y,A,B)$ be four random variables, where $(X,Y)$ is the
  uniformly random choice of a point in the plane $\Fbb_{q}^{2}$ and a
  graph of a polynomial of degree at most $k$ passing through this point. Then
  \[
    \ing(X,Y,A,B)\geq -\frac{k-1}{2}\log q -O(\log H(\XYAB))
  \]
  \unskipp
\end{corollary}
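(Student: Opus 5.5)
The plan is to apply Theorem~\ref{p:main} directly to the uniform pair $\Fbf_{k}=(X,Y)$ supported on $\Fsf_{k}$, after checking its hypotheses, and then substitute the spectral data of $\Fsf_{k}$.

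First, I would verify that $\Fbf_{k}$ is uniform on its support and that $\Fsf_{k}$ is biregular.
\begin{itemize}
\item Each point $(x,y)\in\Fbb_{q}^{2}$ lies on exactly $q^{k}$ graphs of polynomials of degree at most $k$, since fixing the value $p(x)=y$ is one affine linear condition on the $k+1$ coefficients.
\item Each polynomial $p$ has exactly $q$ points on its graph.
\end{itemize}
Hence both marginals are uniform, $|\Esf|=q^{k+2}$, and $I(X;Y)=\log\frac{q^{2}\cdot q^{k+1}}{q^{k+2}}=\log q$. Choosing $\epsilon_{0}:=\log 2$, condition~\eqref{eq:I(X; Y)>C} holds for every prime power $q\geq2$. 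The constant hidden in $O(\cdot)$ then depends only on this fixed $\epsilon_{0}$, and so it is uniform in $q$ and $k$.

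Second, I would plug in the eigenvalues. We have $\lambda_{1}=\sqrt{d_{1}d_{2}}=\sqrt{q^{k}\cdot q}=q^{(k+1)/2}$. For $\lambda_{2}$ I would cite the computation $\log\lambda_{2}(\Fsf_{k})=\frac{k}{2}\log q$ from~\cite{caillat2024common-arxiv}. To make this self-contained, I would compute $M M^{\top}$, where $M$ is the biadjacency matrix. Its $((x,y),(x',y'))$ entry counts the polynomials through both points:
\begin{itemize}
\item it equals $q^{k}$ on the diagonal;
\item it equals $q^{k-1}$ when $x\neq x'$ (assuming $k\geq1$);
\item it equals $0$ when $x=x'$ and $y\neq y'$.
\end{itemize}
Thus $MM^{\top}=q^{k}I+q^{k-1}(J-J_{\mathrm{col}})$, where $J_{\mathrm{col}}$ is the block matrix of ones over points sharing the same $x$. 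The nontrivial eigenvalues of $MM^{\top}$ are then $q^{k}$ (on vectors orthogonal to the column blocks) and $q^{k}-q^{k-1}\cdot q=0$ (on functions of $x$ with zero sum). So $\lambda_{2}^{2}=q^{k}$.

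Substituting, $\log\frac{\lambda_{1}}{\lambda_{2}^{2}}=\frac{k+1}{2}\log q-k\log q=-\frac{k-1}{2}\log q$. Theorem~\ref{p:main} then gives $\ing(X,Y,A,B)\geq-\frac{k-1}{2}\log q-O(\log H(\XYAB))$, which is the claim.

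The only step I regard as nontrivial is the identification of $\lambda_{2}$. The main care point there is the degenerate case $k=0$, where no two points with distinct $x$ are joined; there one should check directly that the stated bound still holds (it becomes $+\frac12\log q$ minus the error term). If this case is problematic, I would simply restrict to $k\geq1$, or handle it separately via the trivial bound.
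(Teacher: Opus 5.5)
Your proposal is correct and follows the paper's route: apply Theorem~\ref{p:main} with $\log\lambda_{1}=\frac{k+1}{2}\log q$ and $\log\lambda_{2}=\frac{k}{2}\log q$. The paper simply cites \cite{caillat2024common-arxiv} for $\lambda_{2}$, whereas your $MM^{\top}$ computation (eigenvalues $q^{k+1}$, $q^{k}$, $0$) gives a valid self-contained check for $k\geq1$.

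One correction to your hedge about $k=0$: there the graph is a disjoint union of $q$ stars, so $\lambda_{2}=\lambda_{1}=\sqrt{q}$. Taking $A:=X$, $B:=Y$ gives $\ing(X,Y,A,B)=0$ with $H(\XYAB)=2\log q$, while the stated bound would read $\ing\geq\frac12\log q-O(\log\log q)$, so it fails for large $q$. Restricting to $k\geq1$, as the paper implicitly does, is therefore necessary rather than optional; neither a direct check nor the trivial bound can rescue the case $k=0$.
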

Since $I(X; Y) = \log q$ and $\ing(X,Y,A,B)\geq -I(X; Y)$ is a
Shannon-type inequality, this corollary is non-trivial only for
$k=1,2$.  In Section~\ref{s:discussion} we discuss the possibility to
make this bound stronger (and non-trivial for $k>2$).

\paragraph{Non-extractable mutual information.} 
And again, the conventional approach fail to prove the Ingleton
inequality for these $(X,Y)$.  Conjecture~\ref{co:nonextractable}
would imply for these $(X,Y)$ and any $W$
\begin{equation*}
  \begin{aligned}
    I(X; Y|W) + I(W; X|Y) + I(W; Y|X)
    &\geq
      \log q - C\cdot\log\log q\\
    &=
      I(X; Y)- C\cdot\log\log q.
  \end{aligned}
\end{equation*}
Similarly with the case of linear flags, in this case the
MMRV inequality~\eqref{eq:mmrv} gives at best
\begin{equation*}
  \ing(X,Y,A,B)\geq -I(X; Y)+C\cdot\log\log q, 
\end{equation*}
which, in turn, is ``almost Shannon'' inequality (up to a 
$(C\cdot\log\log q)$-summand).

\subsection{Linear flags in higher dimensions}  
\paragraph{Construction of $(X,Y)$.}
The example in Section~\ref{s:linear} can be generalized in a different
way. As above, we fix a finite field $\Fbb_{q}$ and three integers
$0<k<l<n$ and define the biregular bipartite graph of linear
($k$-in-$l$)--flags in $\Fbb_{q}^{n}$ as follows. The left and right parts are the Grassmannians
of $k$- and $l$-dimensional linear subspaces of $\Fbb_q^n$,  and the
edge-set is the space of flags:
\begin{equation*}
  \begin{aligned}
    &\Xsf
      :=
      \gr(n,k)=\set{U\st U\subset\Fbb_{q}^{n},\; \dim U=k}\\
    &\Ysf
      :=
      \gr(n,l)=\,\set{V\st V\subset\Fbb_{q}^{n},\; \dim V=l}\\
    &\Esf
      :=
      \set{(U,V)\st U\in\gr(n,k),\; V\in\gr(n,l),\; U\subset V}\\
    &\Fsf_{kl}^{(n)}
    :=
      (\Xsf\sqcup\Ysf,\Esf)
  \end{aligned}
\end{equation*}
Let $\Fbf_{kl}^{(n)}=(X,Y)$ be a uniform pair supported on the graph
$\Fsf_{kl}^{(n)}$, that is a uniformly random choice of a linear
subspace $V$ of dimension $l$ in $\Fbb^{n}$ and a $k$-dimensional
subspace $U$ in $V$.  To write the dimensions of this graph recall
that \emph{Gaussian binomial coefficient} (the size of the
Grassmannian) is defined as
\newcommand{\qbin}[2]{\left[\!\!\begin{array}{cc}#1\\#2\end{array}\!\!\right]_{q}}
\[
  |\gr(n,k)| =  \qbin{n}{k}
  :=
  \frac
  {(q^{n}-1)(q^{n}-q)\dots(q^{n}-q^{k-1})}
  {(q^{k}-1)(q^{k}-q)\dots(q^{k}-q^{k-1})}
  =q^{k(n-k)+O(1)}
\]
Now we can determine the parameters of the graph and compute the entropies:
\begin{equation*}
  \begin{aligned}
    &H(X)=\log|\Xsf|=
    k(n-k)\log q + O(1),\\
    &H(Y)=\log|\Ysf|=
    l(n-l)\log q + O(1),\\
    &I(X; Y)=
    k(n-l)\log q + O(1).
  \end{aligned}
\end{equation*}
\paragraph{Spectral bound for the $\ing$-expression.}\
From the vertex degrees of this graph, we immediately obtain its largest eigenvalue.
\begin{equation*}
  \begin{aligned}
    &\log\lambda_{1}=\frac{\big(n-(l-k)\big)(l-k)}{2}\log q + O(1).
  \end{aligned}
\end{equation*}
The calculation of the rest of the spectrum of this graph is somewhat lengthy and
not very elucidating, therefore it is not included;  besides the
spectrum is likely to be well known to the specialists.  The second
largest eigenvalue is
\[
  \log\lambda_{2}=\frac{(l-k)\big(n-(l-k)-1\big)}{2}\log q+O(1).
\]
Without loss of generality, assume that $k\leq n-l$ (otherwise we can
switch to the dual picture).  Then the largest degree is
\[
  \log d_{1}=(l-k)(n-l)\log q + O(1).
\]
Thus, the graph $\Fsf^{(n)}_{kl}$ is an expander for $k=1$ and
arbitrary $l$;  it attains the lower bound
in Proposition~\ref{p:alonlike}\eqref{eq:alonlike-strong} up to an
additive $O(1)$ term.
For this graph we have
\begin{equation*}
  \log\frac{\lambda_{1}}{\lambda_{2}^{2}}=-\frac{(l-k)(n-(l-k)-2)}{2}\log
  q-O(1)
\end{equation*}

The next statement is the corollary from
Theorem~\ref{p:main} in the current settings.
\begin{corollary}
\label{cor:linear-flags}
  Let $(X, Y, A, B)$ be four random variables, where $(X,Y)$ is the
  uniformly random choice of a ($k$-in-$l$)--flag in
  $\Fbb_{q}^{n}$. Then
  \[
    \ing(X,Y,A,B) \geq -\frac12(l-k)(n-(l-k)-2)-O(\log H(\XYAB))
  \]
  \unskipp
\end{corollary}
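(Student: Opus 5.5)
The plan is to derive the statement directly from Theorem~\ref{p:main}, applied to the uniform pair $\Fbf_{kl}^{(n)}=(X,Y)$ supported on $\Fsf_{kl}^{(n)}$, and then substitute the spectral data computed above. The bound should be read with the factor $\log q$ that was dropped in the displayed statement, i.e.\ as $-\frac12(l-k)(n-(l-k)-2)\log q$, matching the preceding computation of $\log(\lambda_1/\lambda_2^2)$. By construction the pair is uniform on its support, so the only hypothesis of Theorem~\ref{p:main} left to check is the standing assumption~\eqref{eq:I(X; Y)>C}.

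\textbf{Step 1: the threshold.} We have $I(X;Y)=k(n-l)\log q+O(1)$. Exactly, $I(X;Y)=\log\bigl(|\Xsf|\,|\Ysf|/|\Esf|\bigr)=\log\bigl(|\gr(n,k)|/|\gr(l,k)|\bigr)$, and this is at least $\log\bigl(|\gr(n,k)|/|\gr(n-1,k)|\bigr)$, which is bounded below by a positive absolute constant, since $k\ge1$ and $n>l$. We therefore fix $\epsilon_0$ equal to such a constant (for instance $\epsilon_0=\log 2$ when the ratio permits it). Then the implicit constants in Theorem~\ref{p:main} are absolute, independent of $q,n,k,l$.

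\textbf{Step 2: substitution.} Theorem~\ref{p:main} gives
\[
\ing(X,Y,A,B)\ge \log\lambda_1-2\log\lambda_2-O(\log H(\XYAB)).
\]
Inserting $\log\lambda_1=\frac{(n-(l-k))(l-k)}2\log q+O(1)$ and $\log\lambda_2=\frac{(l-k)(n-(l-k)-1)}2\log q+O(1)$, a one-line computation gives
\[
\log\lambda_1-2\log\lambda_2=-\tfrac12(l-k)\bigl(n-(l-k)-2\bigr)\log q-O(1).
\]
Since $H(\XYAB)\ge H(X)\ge \log 2$, the additive $O(1)$ can be absorbed into the $O(\log H(\XYAB))$ term. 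One should also note that $\lambda_2\le\lambda_1$ here: $n-(l-k)\ge 2$ because $k\ge1$ and $l<n$, so the factor $n-(l-k)-1$ is positive.

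\textbf{Main obstacle.} The only nontrivial ingredient is the value of $\lambda_2(\Fsf_{kl}^{(n)})$, which the text states without proof. If it had to be justified, I would use the fact that $\Fsf_{kl}^{(n)}$ is the inclusion graph of a $q$-analogue of the Johnson scheme. Then $AA^{T}$, restricted to functions on $\gr(n,k)$, lies in the Bose--Mesner algebra of the Grassmann scheme, whose eigenvalues are given by $q$-Eberlein polynomials. The second eigenvalue is attained on the first nontrivial eigenspace, and its leading-order term in $q$ gives the stated exponent. A further check is that the $O(1)$ errors in the Gaussian binomials are uniform in $q$; this holds because $\prod_i(1-q^{-i})$ is bounded away from $0$.
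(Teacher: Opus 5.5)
Your proposal is correct and follows the paper's own argument: the corollary is obtained by applying Theorem~\ref{p:main} to the uniform pair on $\Fsf_{kl}^{(n)}$ and substituting the stated values of $\log\lambda_1$ and $\log\lambda_2$, and you are right that the bound should carry the factor $\log q$ that the displayed statement omits. Your check of the threshold condition is a detail the paper leaves implicit, and you do not need to hedge there: $|\gr(n,k)|/|\gr(n-1,k)|=(q^{n}-1)/(q^{n-k}-1)\ge q^{k}\ge 2$, so $\epsilon_0=\log 2$ always works.
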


\paragraph{Non-extractable mutual information.} 
Again, we compare our bound with the conventional proof of the
Ingleton inequality. If Conjecture~\ref{co:nonextractable} holds,
then
\begin{equation*}
  \begin{aligned}
    I(X; Y|W)\!+\!I(W; X|Y)\!+\!I(W; Y|X)
    &\geq
      \min\set{k(n-l)\log q,\frac{l-k}{2}\log q}\\
    &\quad
      -C\cdot\log\log q.
  \end{aligned}
\end{equation*}
In view of this inequality the MMRV inequality gives at best
\begin{equation*}
  \ing(X,Y,A,B)\geq -\frac{l-k}{2}\log q + C\cdot\log\log q
\end{equation*}
When $k=1$, this inequality is also Shannon-type, up to a
$(C\log\log q)$-term.

We now isolate the special case $k=1$ and $l=n-1$ (line-in-hyperplane
flags) of the previous corollary. The graph of
(line-in-hyperplane)-flags is a symmetric expander, and in this case
we get a cleaner bound.
\begin{corollary}
  Let $(X, Y, A, B)$ be four random variables, where $(X,Y)$ is the
  uniformly random choice of a ($1$-in-$(n-1)$)--flag in
  $\Fbb_{q}^{n}$. Then
  \[
    \ing(X,Y,A,B) \geq -O(\log H(\XYAB)).
  \]
  \unskipp
\end{corollary}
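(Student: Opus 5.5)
The plan is to obtain the statement as a direct specialization of Corollary~\ref{cor:linear-flags}, equivalently of Theorem~\ref{p:main}, to the case $k=1$, $l=n-1$. The only work is to compute $\log(\lambda_1/\lambda_2^2)$ for the graph $\Fsf^{(n)}_{1,n-1}$ and check that it is bounded below by $-O(1)$. Because of the big-$O$ convention, an $O(1)$ loss can be absorbed into $O(\log H(\XYAB))$.

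First, I will verify the standing assumption~\eqref{eq:I(X; Y)>C}. Here $I(X;Y)=k(n-l)\log q+O(1)=\log q+O(1)$. More precisely, the point--hyperplane incidence graph has $|\Xsf|=|\Ysf|=\qbin{n}{1}$ and $d_1=d_2=\qbin{n-1}{1}$. This gives $I(X;Y)=\log\bigl(\qbin{n}{1}/\qbin{n-1}{1}\bigr)$, which is roughly $\log q$ and is at least $\log 2>0$ for every $q\ge 2$ and $n\ge 3$. So we may fix $\epsilon_0=\log 2$, and the constants in Theorem~\ref{p:main} are then absolute.

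Second, I will plug $k=1$, $l=n-1$ into the eigenvalue formulas. We have $l-k=n-2$ and $n-(l-k)=2$, so $\log\lambda_1=(n-2)\log q+O(1)$, which matches $\lambda_1=d_1=\qbin{n-1}{1}$. Also $\log\lambda_2=\tfrac{(n-2)(2-1)}{2}\log q+O(1)$. Hence $\log(\lambda_1/\lambda_2^2)=-\tfrac{(l-k)(n-(l-k)-2)}{2}\log q-O(1)=-O(1)$, since the factor $n-(l-k)-2$ vanishes.

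This step is the main obstacle. The $\lambda_2$ formula was stated without proof, so I would instead compute it directly. The $\{\pm\}$-symmetric bipartite spectrum is given by the singular values of the incidence matrix $N$. Since any two distinct points lie in $\qbin{n-2}{1}$ common hyperplanes, we get $NN^{\top}=(\qbin{n-1}{1}-\qbin{n-2}{1})I+\qbin{n-2}{1}J=q^{n-2}I+\qbin{n-2}{1}J$. Its eigenvalues are $\qbin{n-1}{1}^2$ on the all-ones vector and $q^{n-2}$ on its complement. Therefore $\lambda_2=q^{(n-2)/2}$ exactly, and $\lambda_1^2/\lambda_2^2\cdot\lambda_2^{-0}$ simplifies to $\lambda_1/\lambda_2^2=\qbin{n-1}{1}/q^{n-2}\ge 1$. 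So $\log(\lambda_1/\lambda_2^2)\ge 0$, which is even better than $-O(1)$; this generalizes the $n=3$ projective-plane computation.

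Finally, I will apply Theorem~\ref{p:main}: $\ing(X,Y,A,B)\ge\log(\lambda_1/\lambda_2^2)-O(\log H(\XYAB))\ge -O(\log H(\XYAB))$. This gives the claim.
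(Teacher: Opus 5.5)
Your proposal is correct and follows the paper's route. The statement is just Corollary~\ref{cor:linear-flags} (equivalently Theorem~\ref{p:main}) with $k=1$, $l=n-1$, where the factor $n-(l-k)-2$ vanishes, so the spectral term is $-O(1)$.

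Your direct computation $NN^{\top}=q^{n-2}I+\frac{q^{n-2}-1}{q-1}J$ is a welcome addition. It gives $\lambda_2=q^{(n-2)/2}$ exactly and $\log(\lambda_1/\lambda_2^2)>0$, a self-contained check of the eigenvalue formula that the paper only asserts. The stray expression $\lambda_1^2/\lambda_2^2\cdot\lambda_2^{-0}$ is just a typo and should be deleted.
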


\section{Discussion}
\label{s:discussion}
\subsection{On the tightness of the bounds}
\label{s:tightness}
\subsubsection{The spectral summand}
\label{s:tightness1}
In Theorem~\ref{p:main}, the leading term in the lower bound on
$\ing(X,Y,A,B)$ is
\[
 \log \lambda_{1}(G) - 2\log \lambda_{2}(G),
\]
where graph $G=(\Xsf\sqcup \Ysf,\Esf)$ is the support of a uniform pair
$(X,Y)$.

At the same time, the lower bound on the second eigenvalue of a
biregular graph
(Proposition~\ref{p:alonlike}\eqref{eq:alonlike-strong}) implies that
\begin{equation*}
  2\log \lambda_{2}(G) \ge  \log \max\{d_{1}, d_{2}\}  -O(1),
\end{equation*}
and this bound is tight for \emph{bipartite expanders} (graphs $G$
such that $ \lambda_{2}(G) = O(\sqrt{\max\{ d_{1}, d_{2}\}})$. Note
that $\max\{d_{1}, d_{2}\} > \lambda_{1}(G)=\sqrt{d_{1}d_{2}}$ for
graphs with $d_1\neq d_2$.  This observation motivates the following
question.
\begin{question}\label{q:1}
  Can the conclusion of Theorem~\ref{p:main} be strengthened to
  the following stronger inequality?
  \begin{align*}
    \ing(X,Y,A,B)
    &\ge
      -2\log \lambda_{2}(G) + \max\{H(X| Y), H(Y|X)\}\\
    &\quad-
      O\bigl(\log H(X,Y,A,B)\bigr)
  \end{align*}
\end{question}
Note that for the bipartite expanders,  
even for non-balanced ones (including the examples discussed in the previous section), 
the expression
\[
  -2\log \lambda_{2}(G) + \max\{H(X|Y), H(Y|X)\} 
\]
reduces to $O(1)$.

\begin{remark}
  If the answer to Question~\ref{q:1} is positive, then
  Corollary~\ref{p:algflags} (for all $k>0$) and
  Corollary~\ref{cor:linear-flags} (for $k=1$ and arbitrary $l$) both
  rewrite to a stronger form
  \[
    \ing(X,Y,A,B)\geq -O(\log H(\XYAB))
  \]
  \unskipp
\end{remark}

We are not aware of any examples that rule out such a strengthening.
However, our current techniques do not seem sufficient to establish it.

\subsubsection{The $O\big(\log H(\XYAB)\big)$-term}
\label{s:tightness2}
The right-hand side of the inequality in Theorem~\ref{p:main} contains
the error term of order $\log H(\XYAB)$.  As we have observed earlier,
in the remark after Theorem~\ref{th:main}, this summand can not be
reduced to zero, in general.  This is because for the graph of
projective flags over $\Fbb_{q}$ we have
\[
  \log\frac{\lambda_{1}}{\lambda_{2}^{2}}=\log\left(1+\frac1q\right)>0
\]
Thus, without the error term, the right-hand side of the inequality in
Theorem~\ref{p:main} will be strictly positive, whereas taking $A:=X$
and $B:=Y$ makes Ingleton expression equal to zero.  

For pairs $(X,Y)$ uniformly distributed on the graph of projective flags, it is
not hard to show that the mutual information is not totally
extractable, i.e., there is no random variable $W$ satisfying
\eqref{eq:common-information}.  Let us also mention the remarkable
result\footnote{%
  We thank an anonymous referee for drawing our attention to this
  publication.}  of \cite[the second part of
Theorem~2]{csirmaz2023short}, which states that every such pair can be
extended to a quadruple $(X,Y,A,B)$ such that the Ingleton expression
$\ing(X,Y,A,B)$ is strictly negative.  However, we do not know how
negative $\ing(X,Y,\cdot,\cdot)$ can be made for pairs $(X,Y)$
uniformly distributed on a projective flag.  Can it become as small as
$-c \log H(XY)$ or even $-c \log H(XYAB)$ for some constant $c>0$, or
is it always bounded below by an absolute constant?

We stress that the lower bound provided by Theorem~\ref{th:main} 
is not uniform over all extensions of a fixed pair $(X,Y)$.
Rather, it says that very negative values of $\ing$  can occur
only when the extending variables have exponentially large entropy. We
thank the anonymous referee for this observation.

\begin{question}
  Can the term $O\big(\log H(\XYAB)\big)$ in Theorem~\ref{th:main} be
  replaced by $O(\log H(XY))$ or even $O(1)$?
\end{question}

\subsubsection{Explicit constants in the residual term}
\label{s:explict-constants}
A careful analysis of the arguments in the article suggests that the
residual term $O(\log H(\XYAB))$ in Theorem~\ref{p:main} may admit a
more explicit description. More precisely, the proofs indicate that
the error term should depend on the parameters in a form
\[
  C_{1} \cdot\log H(\XYAB)
  - C_{2}\cdot\log\bigl(1-\exp(-\epsilon_{0})  \bigr)
  + C_{3},
\]
where $C_{1},C_{2},C_{3}$ are universal constants independent of the
choice of the threshold value $\epsilon_{0}$.

However, the constants $C_{1}$ and $C_{3}$ implicitly depend on the constants
$\alpha_{4}$ and $\beta_{4}$ from Theorem~\ref{p:cool}, for which no
useful explicit bounds are currently known. 
This leads naturally to the following open problem.
\begin{problem}
  Determine the optimal constants $C_{1},C_{2},C_{3}>0$ for which an
  inequality of the following form holds: for every pair $\Gbf=(X,Y)$
  uniformly distributed on a biregular bipartite graph $\Gsf$, not
  necessarily satisfying assumptions~\eqref{eq:I(X; Y)>C}
  and~\eqref{eq:assumption}, the following bound
  holds for all extensions $(X,Y,A,B)$:
  \[
    \ing(X,Y,A,B)
    \geq
    -\min\set{
      \begin{aligned}
        &I(X; Y),\\
        &\log\frac{\lambda_{2}(\Gsf)^{2}}{\lambda_{1}(\Gsf)}
          +C_{1}\log H(\XYAB)\\
        &\qquad
          -C_{2}\log\bigl(1-\exp(-I(X; Y))\bigr)
          +C_{3}
      \end{aligned}
    },
  \]
  (if I(X; Y)=0, the minimum is interpreted as~0), 
  or establish an alternative bound of a possibly tighter form.
\end{problem}

\subsection{Spectral inequality versus MMRV inequality}
\label{s:spectral-mmrv}

We claim that for $(X,Y)$ distributed on a graph for which
$\lambda_2(\Gsf) \ll \lambda_1(\Gsf)$, the MMRV inequality is weak, while the
implication of Theorem~\ref{p:main} is rather strong. Conversely, when
$\lambda_2(\Gsf)  = \lambda_1(\Gsf)$, the MMRV inequality becomes
strong and the conclusion of Theorem~\ref{p:main} becomes trivial.
Let us explain these observations in more detail.

The spectral inequality, Theorem~\ref{p:main}, is strongest for
uniform pairs $(X,Y)$ supported on balanced expanders, in which case
it gives
\[
  \ing(X,Y,A,B)\geq -O(\log H(\XYAB)).
\]

On the other hand, if Conjecture~\ref{co:nonextractable} is true, the
MMRV inequality becomes very weak in this settings: it is Shannon-type
up to a $\log$-summand, as we demonstrate below.  Suppose $\Gbf=(X,Y)$
is a uniform pair supported on a graph $\Gsf$ satisfying
\[
  2\log\lambda_{1}=\log\lambda_{2}+O(1)
\]
where $O(1)$ term is taken in the asymptotic regime, when $H(X),H(Y)$
become large.  Then Conjecture~\ref{co:nonextractable} implies
\[
  I(X; Y|W) + I(W; X|Y) + I(W; Y|X)
    \geq
    \min\set{I(X; Y),L(X{,}Y)}-C\cdot\log H(XY).
\]
When we combine this bound with the MMRV inequality, we obtain a
rather weak claim. It gives at best the following two lower bounds:
\begin{equation*}
  \left[
    \begin{aligned}
      &\ing(X,Y,A,B)\geq -I(X; Y) + C\cdot\log H(XY)\\
      &\ing(X,Y,A,B)\geq -L(X,Y) + C\cdot\log H(XY)
    \end{aligned}
  \right.
\end{equation*}
both inequalities are Shannon-type up to the error term $C\cdot\log
H(XY)$;  see Lemma~\ref{p:shannon-ineq}.

In fact, for uniform pairs supported on arbitrary expanders, not
necessarily balanced ones, the MMRV inequality remains almost
Shannon-type, whereas Theorem~\ref{p:main} loses its strength when the
expanders lose their balance.

On the other hand, if mutual information of $\Gbf=(X,Y)$ is
extractable, namely there is an extending variable $W$ satisfying
condition~\eqref{eq:common-information}, then the MMRV inequality is a
strong claim that implies the Ingleton inequality. In this case,
unless $I(X; Y)=0$, the graph $\Gsf$ supporting the uniform pair
$\Gbf$ is disconnected, and
\[
  \lambda_{2}(\Gsf)=\lambda_{1}(\Gsf).
\]
Then the conclusion of the main theorem reads
\[
  \ing(X,Y,A,B)\geq -L(X,Y) - O(\log H(\XYAB)), 
\]
which is weaker than a Shannon-type inequality, see Lemma~\ref{p:shannon-ineq}.

\subsection{Remark on a H\"older-type expander mixing lemma}
\label{s:holder}
Our starting point in this investigation was the Expander Mixing Lemma
on the log scale, Corollary~\ref{p:mixing-log}.  Quite similarly one
could prove H\"older (rather then $\ell^{2}$) version of the lemma.

For biregular bipartite graph $G=(\Xsf\sqcup\Ysf, \Esf)$ let
\[
  M:\Rbb^{\Xsf}\times\Rbb^{\Ysf}\to\Rbb
\]
be the adjacency bilinear form. We use the infix notation,
$\phi M\psi$, for the evaluation of $M$ on $\phi\in\Rbb^{\Xsf}$ and
$\psi\in\Rbb^{\Ysf}$. Let $M'$ be the restriction of $M$ to the
orthogonal complements $\ell_{0}(\Xsf)$ and $\ell_{0}(\Ysf)$ of
constant functions in $\Rbb^{\Xsf}$ and $\Rbb^{\Ysf}$, respectively.
For $p,q\in[1,\infty]$ define the $p,q$-norm of $M'$ as
\[
  \|M'\|_{p,q}
  :=
  \sup\set{\frac{|\phi M'\psi|}{\|\phi\|_{p}\cdot\|\psi\|_{q}} \st
    0\neq\phi\in\ell_{0}(\Xsf),\; 0\neq\psi\in\ell_{0}(\Ysf)}
\]
where $\|\phi\|_{p}$ is the $\ell^{p}$-norm of the function $\phi$
with respect to the counting measure on $\Xsf$.
\begin{proposition}\label{p:mixing-log-hölder}
  For any subgraph $W$ in a biregular bipartite graph
  $G=(\Xsf\sqcup\Ysf, \Esf)$ and any $p,q\in[1,\infty]$ at least one
  of the following inequalities holds:
  \begin{enumerate}[label=(\roman*)]
  \item\label{i:boring-hölder}
    $\log |\Esf_{W}|-\log |\Xsf_{W}|-\log |\Ysf_{W}| \leq \log |\Esf| - \log |\Xsf| -
    \log |\Ysf| + \log2$
  \item[or]
  \item\label{i:interesting-hölder}
    $\log |\Esf_{W}|-\frac1p\log |\Xsf_{W}|-\frac1q\log| \Ysf_{W}|
    \leq
    \log\|M'\|_{p,q}+\log 8$
  \end{enumerate}
  \unskipp
\end{proposition}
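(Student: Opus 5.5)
Following the proof of Corollary~\ref{p:mixing-log}, I first pass to the induced closure $\bar W$ of $W$. It has the same vertex sets $\Xsf_W,\Ysf_W$ and at least as many edges, so it suffices to prove the dichotomy for $\bar W$. Then I prove a H\"older analogue of Theorem~\ref{p:mixing}:
\[
  \Bigl|\,|\Esf_{\bar W}| - |\Esf|\tfrac{|\Xsf_W|}{|\Xsf|}\tfrac{|\Ysf_W|}{|\Ysf|}\Bigr|
  \le 4\,\|M'\|_{p,q}\,|\Xsf_W|^{1/p}|\Ysf_W|^{1/q}.
\]
The log-of-a-sum argument then finishes the proof. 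If the main term $s$ dominates, $|\Esf_W|\le 2s$, which is alternative~\ref{i:boring-hölder}. Otherwise $|\Esf_W|\le 2t$ with $t=4\|M'\|_{p,q}|\Xsf_W|^{1/p}|\Ysf_W|^{1/q}$, which gives alternative~\ref{i:interesting-hölder} with constant $\log 8$.

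For the mixing inequality, write $S=\Xsf_W$, $T=\Ysf_W$, and decompose the indicators as $1_S=\frac{|S|}{|\Xsf|}\mathbf 1+\phi$ and $1_T=\frac{|T|}{|\Ysf|}\mathbf 1+\psi$, with $\phi\in\ell_0(\Xsf)$ and $\psi\in\ell_0(\Ysf)$. Biregularity gives $\mathbf 1 M\psi=d_2\sum\psi=0$ and $\phi M\mathbf 1=0$. Hence the cross terms vanish, and
\[
  |\Esf_{\bar W}|=1_S M 1_T = |\Esf|\tfrac{|S||T|}{|\Xsf||\Ysf|}+\phi M'\psi .
\]
By definition of the norm, $|\phi M'\psi|\le\|M'\|_{p,q}\|\phi\|_p\|\psi\|_q$.

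It remains to bound $\|\phi\|_p$ by a constant times $|S|^{1/p}$, and similarly for $\psi$. This is the only step needing care, since for $p\ne2$ the projection onto $\ell_0$ is not a contraction in general. I would handle it directly with the triangle inequality. We have $\|\phi\|_p\le\|1_S\|_p+\frac{|S|}{|\Xsf|}\|\mathbf 1\|_p$, and the second term equals $|S|\,|\Xsf|^{1/p-1}\le|S|^{1/p}$ because $|S|\le|\Xsf|$ and $1-1/p\ge0$. So $\|\phi\|_p\le 2|S|^{1/p}$, including $p=\infty$ with the convention $1/p=0$. Likewise $\|\psi\|_q\le2|T|^{1/q}$, which yields the factor $4$.

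The case split is then exactly as in Corollary~\ref{p:mixing-log}. Substituting $|\Esf_W|\le|\Esf_{\bar W}|$ and rearranging gives the two alternatives with the stated constants $\log 2$ and $\log 2+\log4=\log 8$. Degenerate cases, such as $\phi=0$, make the error term zero and fall into alternative~\ref{i:boring-hölder}.
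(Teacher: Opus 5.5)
Your argument is correct and is the route the paper intends. The paper gives no separate proof; it says only that the H\"older version is proved ``quite similarly'' to Corollary~\ref{p:mixing-log}. Your steps are: split off the constant components, kill the cross terms by biregularity, pay a factor $2$ for each of $\|\phi\|_p$ and $\|\psi\|_q$, then apply the same log-of-a-sum case split. This reproduces exactly the stated constant $\log 8=\log 2+\log 4$.
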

We then can proceed in a similar manner as in
Sections~\ref{s:ing-q-unif}--\ref{s:main}. However, for general $p,q$
the norm $\|M'\|_{p,q}$ is very difficult to estimate even in the
simplest of examples, and at the moment we do not know of any
situations, where it would be useful. For this reason we did not
pursue this line of research in the present article.

\section*{Acknowledgments}
The ideas leading to this work originated during a short visit of the
second author and Jim Portegies (TU Eindhoven) to the first author at
the Max Planck Institute for Mathematics in the Sciences, Leipzig, in
2020.  Both authors thank the Institute for its hospitality and Jim
Portegies for valuable discussions.
\smallskip

\noindent
The authors are grateful to the anonymous reviewers of the journal IEEE Transactions on Information theory for their careful
reading of the manuscript and for their insightful and constructive comments, which led to substantial improvements in the presentation
and exposition of the paper.

\smallskip

\noindent
This work was supported by the Agence Nationale de la Recherche (ANR)
under project ANR-21-CE48-0023.

\printbibliography[heading=bibintoc]

\appendix

\section{Shannon-type inequalities used in the article}
\label{s:shannon}
\def\thelemma{\thesection.\Alph{lemma}}
\begin{lemma}
  \label{p:shannon-ineq} 
Let $(X,Y,A,B,W)$ be a tuple of jointly distributed random
variables. Then the following inequalities are Shannon-type:
\begin{enumerate}[label=(\roman*)]
\item\label{i:abw} $I(A; B)\geq H(W)-H(W|A)-H(W|B)$;

\item\label{i:x:y:w}
$H(W) -  I(X;Y)  =  H(W|X) + H(W|Y) - H(W|XY) - I(X;Y|W)$;

\item\label{i:ing-i} $\ing(X,Y,A,B)\geq -I(X; Y)$; 
\item\label{i:ing-h} $\ing(X,Y,A,B)\geq -H(X|Y)$; 
\item\label{i:ing-hh/2} $\ing(X,Y,A,B)\geq-\L(X,Y)$; 
\item\label{i:ing-lll} $\ing(X,Y,A,B)\geq -\L(X,Y|A)-\L(X,Y|B)+\L(X,Y)$; 

\item\label{i:ing-soft}
  $\displaystyle
  \begin{aligned}[t]
    \ing(X,Y,A,B)\geq&{}   - I(X; Y|W) -  H(W|X) -  H(W|Y)  - H(W|XY)\\ 
    \geq&  {} - I(X; Y|W) -  H(W|X) -  H(W|Y)  
     - \min \{ H(W|X), H(W|Y)\}.
  \end{aligned}
  $
\end{enumerate}
\end{lemma}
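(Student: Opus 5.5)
The plan is to verify each item by writing it, after expanding mutual informations into entropies, as a nonnegative combination of elementary Shannon inequalities (conditional mutual informations and conditional entropies). Items \ref{i:abw} and \ref{i:x:y:w} are basic and will be done first, then used as building blocks for the remaining items.

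For \ref{i:abw}, write $I(A;B)\ge I(A;B)-I(A;B|W)=I(A;B;W)$, and $I(A;B;W)=I(W;A)-I(W;A|B)$. Since $I(W;A|B)\le H(W|B)$ and $I(W;A)=H(W)-H(W|A)$, this yields $H(W)-H(W|A)-H(W|B)$. Item \ref{i:x:y:w} is an exact identity. Expanding both sides in joint entropies, $H(W)-I(X;Y)-H(W|X)-H(W|Y)+H(W|XY)+I(X;Y|W)$ should vanish after cancellation; I will just check it.

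Items \ref{i:ing-i}--\ref{i:ing-lll} concern the Ingleton expression.
\begin{itemize}
\item \ref{i:ing-i} is immediate from nonnegativity of the first three terms.
\item For \ref{i:ing-lll}, use $I(X;Y|A)=\tfrac12H(X|A)+\tfrac12H(Y|A)-\L(X,Y|A)$, and similarly for $B$ and for the unconditioned term. It then suffices to show
\[
\tfrac12\bigl(H(X|A)+H(X|B)-H(X)\bigr)+\tfrac12\bigl(H(Y|A)+H(Y|B)-H(Y)\bigr)+I(A;B)\ge0 .
\]
This is the average of \ref{i:abw} applied with $W=X$ and with $W=Y$.
\item \ref{i:ing-hh/2} follows from \ref{i:ing-lll} by $\L(X,Y|A)\le\L(X,Y)$. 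This holds because conditioning decreases $H(X|Y\cdot)$ and $H(Y|X\cdot)$, and the same applies for $B$. The result is $\ing\ge-\L(X,Y)$.
\item For \ref{i:ing-h}, I will try the combination $\ing+H(X|Y)\ge0$ obtained via \ref{i:abw} with $W=X$:
\[
I(A;B)\ge H(X)-H(X|A)-H(X|B),\qquad I(X;Y|A)=H(X|A)-H(X|YA),
\]
and similarly for $B$. Summing gives $\ing\ge H(X|Y)-H(X|YA)-H(X|YB)\ge -H(X|Y)$, since each of the last two terms is at most $H(X|Y)$.
\end{itemize}

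For \ref{i:ing-soft}, I will follow the classical Ingleton proof with $W$ in the role of the common part. Bound $I(A;B)\ge H(W)-H(W|A)-H(W|B)$ using \ref{i:abw}. Then use $I(X;Y|A)\ge I(W;Y|A)-H(W|X)$: the chain rule gives $I(XW;Y|A)\ge I(W;Y|A)$ and $I(XW;Y|A)\le I(X;Y|A)+H(W|XA)$. Next, $I(W;Y|A)=H(W|A)-H(W|YA)\ge H(W|A)-H(W|Y)$, and likewise for $B$. Summing cancels the $H(W|A)$ and $H(W|B)$ terms, leaving
\[
\ing\ge H(W)-I(X;Y)-2H(W|X)-2H(W|Y).
\]
Substituting \ref{i:x:y:w} for $H(W)-I(X;Y)$ gives $-I(X;Y|W)-H(W|X)-H(W|Y)-H(W|XY)$. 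The second line of \ref{i:ing-soft} then follows from $H(W|XY)\le\min\{H(W|X),H(W|Y)\}$.

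The main obstacle is bookkeeping in \ref{i:ing-soft}: making the $W|A$ and $W|B$ terms cancel exactly and landing on the stated constants. If the chosen combination produces extra terms, I will instead fall back on a direct linear-programming style check of the entropy coefficients.
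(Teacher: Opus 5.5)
Your proposal is correct and follows the paper's proof essentially item by item. Items (i), (vi) and (vii) use the same arguments; in particular, your bound $I(X;Y|A)\ge H(W|A)-H(W|XA)-H(W|YA)$ is exactly the paper's conditional instance of (i). Your derivation of (iv) has precisely the slack terms of the paper's identity $\ing(X,Y,A,B)+H(X|Y)=H(X|AB)+I(X;A|Y)+I(X;B|Y)+I(A;B|X)$, where the paper writes $H(X|AB)$ as $H(X|YAB)+I(X;Y|AB)$.

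The only deviation is in (v). You obtain it from (vi) using $\L(X,Y|A)\le\L(X,Y)$ and $\L(X,Y|B)\le\L(X,Y)$, whereas the paper averages (iv) with its mirror image under $X\leftrightarrow Y$. Both routes are valid.
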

\begin{proof}
  All \emph{equalities} of entropic expressions below can be verified
  by expanding both left- and right-hand sides in absolute entropies
  of the joints. Reader is also encouraged to draw Venn diagrams
  explained in \cite{yeung2012first} as a guiding principle.
   
 \begin{enumerate}[label=(\roman*)]
  \item $I(A; B)-\big(H(W)-H(W|A)-H(W|B)\big)
    =
      I(A; B|W)+H(W|AB)\geq0
      $; 

\item It is enough to use the standard expansions for conditional
  entropy and mutual information,
  \[
    \begin{array}{rcl}
      H(W|X) &=& H(XW) - H(X), \\
      H(W|Y) &=&  H(YW) - H(Y),\\
      H(W|XY) &=&  H(XYW)- H(XY),\\
      I(X;Y) &=&  H(X) + H(Y) - H(XY),\\
      I(X;Y|W) &=&  H(XW) + H(YW) - H(XYW) - H(W) 
    \end{array}
  \]
  and substitute them in \ref{i:x:y:w}. 
  
\item $\ing(X,Y,A,B)=I(X; Y|A)+I(X; Y|B)+I(A; B)-I(X; Y)\geq-I(X; Y)$; 
  
\item $\ing(X,Y,A,B)+H(X|Y)=H(X|Y\!AB)+I(X; Y|AB)+I(X; A|Y)$\\
  $\mbox{}\mkern220mu + I(X; B|Y) + I(A; B|X)\geq0$; 
\item Take inequality in \ref{i:ing-h} and the symmetric one
  obtained by transposing $X$ and $Y$. Their average gives the
  required inequality. 
\item Apply inequality \ref{i:abw} with $W:=X$
  and with $W:=Y$ and average:
  \begin{equation*}
    \begin{aligned}
      I(X;Y) \leq&\ I(X; Y|A) + I(X; Y|B) + I(A; B)\\
                 &  + I(X;Y) - H(W)+ 2H(W|X) + 2H(W|Y).
    \end{aligned}
  \end{equation*}
  Substitute the above inequality in the expression for
  $\ing(X,Y,A,B)$:
  \begin{equation*}
    \begin{aligned}
      \ing(X,Y,A,B)
      &:=
        I(X; Y|A) + I(X; Y|B) + I(A; B) -
        I(X; Y)\\
      &\geq\quad
        I(X; Y|A) - \frac12 H(X|A) -\frac12 H(Y|A)\\
      &\quad+
        I(X; Y|B) - \frac12 H(X|B) -\frac12 H(Y|B)\\
      &\quad+
        \frac12 H(X) +\frac12 H(Y)-I(X; Y)\\
      &=-\L(X,Y|A)-\L(X,Y|B)+\L(X,Y)
    \end{aligned}
  \end{equation*}
\item We use inequality~\ref{i:abw} 
  \[
    H(W) \le H(W|A) + H(W|B) + I(A; B)
  \]
  and two ``conditional'' instances of a similar inequality
  $ H(W) \le H(W|X) + H(W|Y) + I(X; Y)$,
  \begin{equation*}
    \begin{aligned}
      H(W|A) &\leq H(W|X,A) + H(W|Y,A) + I(X; Y|A)\\
             &\leq H(W|X) + H(W|Y) + I(X; Y|A) 
    \end{aligned}
  \end{equation*}
  and   
  \begin{equation*}
    \begin{aligned}
      H(W|B) &\leq H(W|X,B) + H(W|Y,B) + I(X; Y|B)\\
             &\leq H(W|X) + H(W|Y) + I(X; Y|B). 
    \end{aligned}
  \end{equation*}
  The sum of these three inequalities gives
  \begin{equation*}
    \begin{aligned}
      H(W)
      &\leq I(X; Y|A) + I(X; Y|B) + I(A; B)\\
      &\quad
        + 2H(W|X) + 2H(W|Y),
    \end{aligned}
  \end{equation*}
  which rewrites to
  \begin{equation*}
    \begin{aligned}
      I(X;Y)
      &\leq
        I(X; Y|A) + I(X; Y|B) + I(A; B)\\
      &\quad
        + I(X;Y) - H(W)+ 2H(W|X) + 2H(W|Y)\\
    \end{aligned}
  \end{equation*}
  We combine this inequality with \ref{i:x:y:w} and obtain 
  \begin{equation*}
    \begin{aligned}
      I(X;Y) \leq&
                   \ I(X; Y|A) + I(X; Y|B) + I(A; B)\\
                 & -  H(W|X) - H(W|Y) + H(W|XY) + I(X;Y|W)\\
                 &+ 2H(W|X) + 2H(W|Y)\\
      =&\ I(X; Y|A) + I(X; Y|B) + I(A; B)\\
                 &+ I(X;Y|W) +  H(W|X) + H(W|Y) + H(W|XY) .
    \end{aligned}
  \end{equation*}
  It remains to observe that
  \[
    H(W|XY) \le \min\{H(W|X), H(W|Y)  \} .
  \]
  
\end{enumerate}
\end{proof}

\begin{remark}
  Elementary manipulations with entropy inequalities, and in
  particular checking whether a given inequality is of Shannon-type,
  can be carried out mechanically with the help of a computer.
  Several open-source software packages are available for this
  purpose. The general theory behind such solvers (the reduction of an
  information-theoretic problem to a linear programming problem) is
  explained in~\cite{yeung2012first,yeung2021machine}.  The first
  software package for checking whether an information inequality is
  of Shannon-type was a MATLAB-based solver ITIP~\cite{yeung1996itip}
  (see \cite{yeung1997framework} for the theoretical framework behind
  this software).  Several further developments of this approach
  appeared later:
\begin{itemize}
\item Xitip~\cite{pulikkoonattu2008xitip}, a platform-independent version of ITIP; 
\item  MINITIP~\cite{minitip}, a descendant of ITIP, written in C, uses GLPK (GNU Linear Programming Kit) as the LP solver; 
\item AITIP~\cite{pulikkoonattu2020aitip} can produce human-readable proofs and attempts to minimize the number of proof steps;  it is based on the theory presented in~\cite{ho2020proving}; 
\item PSITIP~\cite{PSITIP},  a more advanced computer algebra system for information theory written in Python;  it is based on the framework introduced in~\cite{li2023automated}.
\end{itemize}
A more detailed discussion of these software packages can be found in~\cite{yeung2021machine}.
\end{remark}

\end{document}